\newtheorem{theorem}{Theorem}[section]
\newtheorem{proposition}[theorem]{Proposition}
\newtheorem{definition}[theorem]{Definition}
\newtheorem{claim}[theorem]{Claim}
\newtheorem{lemma}[theorem]{Lemma}
\newtheorem{corollary}[theorem]{Corollary}
\newtheorem{remark}[theorem]{Remark}
\newcommand{\qedsymb}{\hfill{\rule{2mm}{2mm}}}
\renewenvironment{proof}[1][]{\begin{trivlist}
\item[\hspace{\labelsep}{\bf\noindent Proof#1:\/}] }{\qedsymb\end{trivlist}}
\def\calF{{\cal F}}
\def\calE{{\cal E}}
\def\calM{{\cal M}}
\def\Q{\mathbb{Q}}
\newcommand\Prob[2]{{\Pr_{#1}\left[ {#2} \right]}}
\newcommand{\NP}{\mathsf{NP}}
\newcommand{\eps}{\epsilon}
\renewcommand{\epsilon}{\varepsilon}
\newcommand{\poly}{\mathop{\mathrm{poly}}}
\newcommand{\LEAF}{\textsc{Leaf}}
\newcommand{\SchrijverP}{\textsc{Schrijver}}
\newcommand{\KneserP}{\textsc{Kneser}}
\newcommand{\Agree}{\textsc{Agreeable-Set}}
\newcommand{\PPA}{\mathsf{PPA}}
\newcommand{\TFNP}{\mathsf{TFNP}}
\begin{document}

\title{{\bf Fixed-Parameter Algorithms for the Kneser and Schrijver Problems}\footnote{This paper subsumes the results of the two conference papers~\cite{Haviv22a,Haviv22b}.}}

\author{
Ishay Haviv\thanks{School of Computer Science, The Academic College of Tel Aviv-Yaffo, Tel Aviv 61083, Israel. Research supported in part by the Israel Science Foundation (grant No.~1218/20).
}
}

\date{}

\maketitle

\begin{abstract}
The Kneser graph $K(n,k)$ is defined for integers $n$ and $k$ with $n \geq 2k$ as the graph whose vertices are all the $k$-subsets of $[n]=\{1,2,\ldots,n\}$ where two such sets are adjacent if they are disjoint.
The Schrijver graph $S(n,k)$ is defined as the subgraph of $K(n,k)$ induced by the collection of all $k$-subsets of $[n]$ that do not include two consecutive elements modulo $n$.
It is known that the chromatic number of both $K(n,k)$ and $S(n,k)$ is $n-2k+2$.

In the computational $\KneserP$ and $\SchrijverP$ problems, we are given an access to a coloring with $n-2k+1$ colors of the vertices of $K(n,k)$ and $S(n,k)$ respectively, and the goal is to find a monochromatic edge.
We prove that the problems admit randomized algorithms with running time $n^{O(1)} \cdot k^{O(k)}$, hence they are fixed-parameter tractable with respect to the parameter $k$. The analysis involves structural results on intersecting families and on induced subgraphs of Kneser and Schrijver graphs.

We also study the $\Agree$ problem of assigning a small subset of a set of $m$ items to a group of $\ell$ agents, so that all agents value the subset at least as much as its complement. As an application of our algorithm for the $\KneserP$ problem, we obtain a randomized polynomial-time algorithm for the $\Agree$ problem for instances with $\ell \geq m - O(\frac{\log m}{\log \log m})$. We further show that the $\Agree$ problem is at least as hard as a variant of the $\KneserP$ problem with an extended access to the input coloring.
\end{abstract}

\newpage

\section{Introduction}

The Kneser graph $K(n,k)$ is defined for integers $n$ and $k$ with $n \geq 2k$ as the graph whose vertices are all the $k$-subsets of $[n] = \{1,2,\ldots,n\}$ where two such sets are adjacent if they are disjoint. In 1955, Kneser~\cite{Kneser55} observed that there exists a proper coloring of the vertices of $K(n,k)$ with $n-2k+2$ colors and conjectured that fewer colors do not suffice, that is, that its chromatic number satisfies $\chi(K(n,k)) = n-2k+2$.
The conjecture was proved more than two decades later by Lov\'asz~\cite{LovaszKneser} as a surprising application of the Borsuk--Ulam theorem from algebraic topology~\cite{Borsuk33}. Following this result, topological methods have become a common and powerful tool in combinatorics, discrete geometry, and theoretical computer science (see, e.g.,~\cite{MatousekBook}).
Several alternative proofs of Kneser's conjecture were provided in the literature over the years (see, e.g.,~\cite{MatousekZ04}), and despite the combinatorial nature of the conjecture, all of them essentially rely on the topological Borsuk--Ulam theorem. One exception is a proof by Matou{\v{s}}ek~\cite{Matousek04}, which is presented in a combinatorial form, but is yet inspired by a discrete variant of the Borsuk--Ulam theorem known as Tucker's lemma.

The Schrijver graph $S(n,k)$ is defined as the subgraph of $K(n,k)$ induced by the collection of all $k$-subsets of $[n]$ that do not include two consecutive elements modulo $n$ (i.e., the $k$-subsets $A \subseteq [n]$ such that if $i \in A$ then $i+1 \notin A$, and if $n \in A$ then $1 \notin A$).
Schrijver proved in~\cite{SchrijverKneser78}, strengthening Lov\'asz's result, that the chromatic number of $S(n,k)$ is equal to that of $K(n,k)$.
His proof technique relies on a proof of Kneser's conjecture due to B{\'{a}}r{\'{a}}ny~\cite{Barany78}, which was obtained soon after the one of Lov\'asz and combined the topological Borsuk--Ulam theorem with a lemma of Gale~\cite{Gale56}.
It was further proved in~\cite{SchrijverKneser78} that $S(n,k)$ is vertex-critical, that is, the chromatic number of any proper induced subgraph of $S(n,k)$ is strictly smaller than that of $S(n,k)$.

In the computational $\KneserP$ and $\SchrijverP$ problems, we are given an access to a coloring with $n-2k+1$ colors of the vertices of $K(n,k)$ and $S(n,k)$ respectively, and the goal is to find a monochromatic edge, i.e., two vertices with the same color that correspond to disjoint sets.
Since the number of colors used by the input coloring is strictly smaller than the chromatic number of the graph~\cite{LovaszKneser,SchrijverKneser78}, it follows that every instance of these problems has a solution.
However, the topological argument behind the lower bound on the chromatic number is not constructive, in the sense that it does not suggest an efficient algorithm for finding a monochromatic edge.
By an efficient algorithm we mean that its running time is polynomial in $n$, whereas the number of vertices might be exponentially larger.
Hence, it is natural to assume that the input coloring is given as an access to an oracle that given a vertex of the graph returns its color.
The input can also be given by some succinct representation, e.g., a Boolean circuit that computes the color of any given vertex.

In recent years, it has been shown that the complexity class $\PPA$ (Polynomial Parity Argument) perfectly captures the complexity of several total search problems for which the existence of the solution relies on the Borsuk--Ulam theorem.
This complexity class belongs to a family of classes that were introduced in 1994 by Papadimitriou~\cite{Papa94} in the attempt to characterize the mathematical arguments that lie behind the existence of solutions to search problems of $\TFNP$.
The complexity class $\TFNP$, introduced in~\cite{MegiddoP91}, is the class of total search problems in $\NP$, namely, the search problems in which a solution is guaranteed to exist and can be verified in polynomial time.
Papadimitriou has introduced in~\cite{Papa94} several subclasses of $\TFNP$, each of which consists of the total search problems that can be efficiently reduced to a problem that represents some mathematical argument.
One of those subclasses was $\PPA$ that corresponds to the fact that every (undirected) graph with maximum degree $2$ that has a vertex of degree $1$ must have another degree $1$ vertex. Hence, $\PPA$ is the class of all problems in $\TFNP$ that can be efficiently reduced to the $\LEAF$ problem, in which given a succinct representation of a graph with maximum degree $2$ and given a vertex of degree $1$ in the graph, the goal is to find another such vertex.

A prominent example of a $\PPA$-complete problem whose totality is related to the Borsuk--Ulam theorem is the one associated with the Consensus Halving theorem~\cite{HR65,SimmonsS03}.
The $\PPA$-completeness of the problem was proved for an inverse-polynomial precision parameter by Filos-Ratsikas and Goldberg~\cite{FG18,FG19}, and this was improved to a constant precision parameter in a recent work of Deligkas, Fearnley, Hollender, and Melissourgos~\cite{DeligkasFHM22}.
The hardness of the Consensus Halving problem was used to derive the $\PPA$-completeness of several other problems.
This includes the Splitting Necklace problem with two thieves, the Discrete Sandwich problem~\cite{FG18,FG19}, the Fair Independent Set in Cycle problem, and the aforementioned $\SchrijverP$ problem (with the input coloring given as a Boolean circuit)~\cite{Haviv22-FISC}.
As for the $\KneserP$ problem, the question of whether it is $\PPA$-complete was proposed by Deng, Feng, and Kulkarni~\cite{DengFK17} and is still open.
The question of determining the complexity of its extension to Kneser hypergraphs was recently raised by Filos-Ratsikas, Hollender, Sotiraki, and Zampetakis~\cite{Filos-RatsikasH21}.

The study of the $\KneserP$ problem has been recently motivated by its connection to a problem called $\Agree$ that was introduced by Manurangsi and Suksompong~\cite{ManurangsiS19} and further studied by Goldberg, Hollender, Igarashi, Manurangsi, and Suksompong~\cite{GoldbergHIMS20}.
This problem falls into the category of resource allocation problems, where one assigns items from a given collection $[m]$ to $\ell$ agents that have different preferences. The preferences are given by monotone utility functions that associate a non-negative value to each subset of $[m]$. In the $\Agree$ setting, the agents act as a group, and the goal is to collectively allocate a subset of items that is agreeable to all of them, in the sense that every agent likes it at least as much as it likes the complement set. The authors of~\cite{ManurangsiS19} proved that for every $\ell$ agents with monotone utility functions defined on the subsets of $[m]$, there exists a subset $S \subseteq [m]$ of size
\begin{eqnarray}\label{eq:agreeable_size}
|S| \leq \min \Big ( \Big \lfloor \frac{m+\ell}{2} \Big \rfloor,m \Big )
\end{eqnarray}
that is agreeable to all agents, and that this bound is tight in the worst case. They initiated the study of the $\Agree$ problem that given an oracle access to the utility functions of the $\ell$ agents, asks to find a subset $S \subseteq [m]$ that satisfies the worst-case bound given in~\eqref{eq:agreeable_size} and that is agreeable to all agents (see Theorem~\ref{thm:MS} and Definition~\ref{def:Agree}).
Note that for instances with $\ell \geq m$, the collection $S=[m]$ satisfies~\eqref{eq:agreeable_size} and thus forms a solution.

Interestingly, the proof given in~\cite{ManurangsiS19} for the existence of a solution to every instance the $\Agree$ problem relies on the chromatic number of Kneser graphs.
In fact, the proof implicitly shows that the $\Agree$ problem is efficiently reducible to the $\KneserP$ problem, hence the $\KneserP$ problem is at least as hard as the $\Agree$ problem.
An alternative existence proof, based on the Consensus Halving theorem~\cite{SimmonsS03}, was given by the authors of~\cite{GoldbergHIMS20}. Their approach was applied there to show that the $\Agree$ problem can be solved in polynomial time on instances with additive utility functions and on instances in which the number $\ell$ of agents is a fixed constant. The complexity of the problem in the general case is still open.

\subsection{Our Contribution}

Our main result concerns the parameterized complexity of the $\KneserP$ and $\SchrijverP$ problems.
In the area of parameterized complexity, a decision problem whose instances involve a parameter $k$ is said to be fixed-parameter tractable with respect to $k$ if it admits an algorithm whose running time is bounded by a polynomial in the input size multiplied by an arbitrary function of $k$ (see, e.g.,~\cite{CyganFKLMPPS15}).
We adopt the notion of fixed-parameter tractability to our setting, where the studied problems are total search problems rather than decision problems, and where the inputs are given as an oracle access.
We say that an algorithm for the $\KneserP$ and $\SchrijverP$ problems is fixed-parameter with respect to the parameter $k$ if its running time on an input coloring of, respectively, $K(n,k)$ and $S(n,k)$ is bounded by $n^{O(1)} \cdot f(k)$ for some function $f$.
The following theorem shows that the $\SchrijverP$ problem is fixed-parameter tractable with respect to $k$.

\begin{theorem}\label{thm:AlgoKneserNew}
There exists a randomized algorithm that given integers $n$ and $k$ with $n \geq 2k$ and an oracle access to a coloring of the vertices of the Schrijver graph $S(n,k)$ with $n-2k+1$ colors, runs in time $n^{O(1)} \cdot k^{O(k)}$ and returns a monochromatic edge with high probability.
\end{theorem}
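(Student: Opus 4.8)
The plan is to reduce the $\SchrijverP$ problem to the $\KneserP$ problem on a Kneser graph with a much smaller ground set, so that the existing fixed-parameter algorithm of~\cite{Haviv22} can be invoked as a black box. The key observation is that the topological argument behind Schrijver's theorem (via Gale's lemma) is essentially a statement about a specific sub-structure: there is a sparse set of ``witness'' vertices of $S(n,k)$ on which the topology already forces a monochromatic edge. Concretely, I would first sample a random subset $R \subseteq [n]$ of size $m = \Theta(k^2)$ (or $\Theta(k \log k)$, to be tuned) and consider the sub-hypergraph of $S(n,k)$ induced by $k$-subsets contained in $R$. The point is that since the input coloring uses only $n-2k+1$ colors while $\chi(S(n,k)) = n-2k+2$, after restricting to $R$ we are left with a coloring of (a graph closely related to) $S(m, k)$ by fewer colors than its chromatic number allows, provided $m - 2k + 1 \le $ (number of colors actually appearing) --- and here one must be careful, because the ambient coloring uses $n-2k+1$ colors which is \emph{larger} than $m-2k+2$. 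So the real reduction goes the other way: I would use a Gale-type cyclic arrangement to pick, for each ``large'' color class, a small Schrijver-subgraph witness.

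More precisely, the approach I would take mirrors the Kneser case in~\cite{Haviv22}: there one shows that for a uniformly random restriction to a ground set of size roughly $2k + O(k)$, with non-negligible probability the number of colors surviving is at most the size-induced chromatic bound, and then one brute-forces over the $\binom{m}{k} = 2^{O(k)}$ vertices of the restricted Kneser graph to find the monochromatic edge that must exist. For Schrijver, the first step is to argue that the restriction of $S(n,k)$ to a carefully chosen sub-interval structure still contains a copy of some $S(n', k)$ with $n'$ small; the subtlety is that the ``no two consecutive mod $n$'' condition does not restrict nicely to an arbitrary subset $R$ of $[n]$. I would handle this by choosing $R$ to be a union of $k$ ``blocks'' of consecutive elements separated by gaps, so that a $k$-set picking at most one element from the interior of each block is automatically sparse in $S(n,k)$; the induced structure on block-choices is then isomorphic to a Schrijver-type graph on a ground set of size $O(k)$ per block, i.e., $O(k^2)$ total. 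Then I would invoke (a mild generalization of) the counting/probabilistic argument: with probability bounded below by $1/\poly$, the $n - 2k + 1$-coloring restricted to these $2^{O(k)}$ vertices uses at most $n' - 2k + 1$ of its colors on this sub-Schrijver-graph, where $n'$ is chosen just large enough that $n' - 2k + 2$ exceeds that count, forcing a monochromatic edge inside by Schrijver's theorem itself.

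The cleanest route, and the one I would actually pursue, is a direct reduction: build from the oracle for $S(n,k)$ an oracle for $S(n', k)$ or $K(n', k)$ with $n' = O(k^2)$ and at most $n' - 2k + 1$ colors, then call the algorithm of~\cite{Haviv22} (for Kneser) or its analogue. To get the color budget down from $n - 2k + 1$ to $n' - 2k + 1$ one amalgamates colors: partition the $n - 2k + 1$ colors arbitrarily into $n' - 2k + 1$ groups and recolor; a monochromatic edge in the coarsened coloring of the small graph need \emph{not} be monochromatic in the original, so instead one iterates --- run on the small instance, get a ``monochromatic-in-a-group'' edge, and recurse on the sub-instance where that group is split. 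Since each color group has size at most $O(n/k^2)$, after $O(\log n)$ levels of refinement the group is a singleton and the edge is genuinely monochromatic; each level costs $n^{O(1)} \cdot k^{O(k)}$, so the total is still $n^{O(1)} \cdot k^{O(k)}$.

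The main obstacle I anticipate is the restriction step: ensuring that the sub-structure of $S(n,k)$ on the chosen block-union $R$ is genuinely isomorphic (as a graph, respecting disjointness \emph{and} the sparsity constraint) to a Schrijver graph on a small ground set, rather than merely to a Kneser graph. If one cannot preserve the Schrijver structure, one must fall back to embedding a copy of $K(n', k)$ inside $S(n,k)$ --- which is possible since $S(2k + t, k)$ contains $K(t+2, ?)$-like substructures --- and then the issue becomes whether the color-count on that embedded Kneser subgraph is forced below \emph{its} chromatic number, which requires the embedded ground set to have size $n' \approx n$, defeating the parameterization. Resolving this tension --- getting a small ground set while keeping enough ``topological content'' to force a monochromatic edge --- is the crux, and I expect it to require the Gale/sparse-arrangement viewpoint from B\'ar\'any's proof~\cite{Barany78,Gale56} that Schrijver's argument is built on, exploited quantitatively so that only $O(k^2)$ ground-set elements participate in the winding-number obstruction.
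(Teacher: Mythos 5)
Your proposal and the paper end up at the same destination---a randomized Turing kernelization that shrinks the ground set to size $\poly(k)$ while keeping the number of colors one below the chromatic number---but your proposed route does not close, and the paper's mechanism is quite different.

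The color-amalgamation-and-refine scheme is the central gap. You partition the $n-2k+1$ colors into $n'-2k+1$ groups to get a valid small instance, find an edge $\{A,B\}$ that is monochromatic only in the coarsened coloring, and propose to ``recurse on the sub-instance where that group is split.'' But there is no such sub-instance: once you split the group, you are back to a coloring with more than $\chi(S(n',k))-1$ colors on the small graph, so no monochromatic edge is forced, and the edge $\{A,B\}$ you found carries no usable information about where a genuinely monochromatic edge lives in $S(n,k)$. The $O(\log n)$-level recursion you sketch is not well-defined. Separately, you treat ``does restriction of $S(n,k)$ to a subset $X$ give a Schrijver graph?'' as the crux, but this part is actually easy and you have the difficulty backwards: taking the cyclic ordering of $X$ induced from $[n]$, the family $\binom{X}{k}_{\mathrm{stab}}$ of $X$-stable sets is a subset of $\binom{[n]}{k}_{\mathrm{stab}}$, and the induced subgraph is isomorphic to $S(|X|,k)$. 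The real difficulty---which your amalgamation trick was meant to dodge but doesn't---is that shrinking the ground set does not shrink the color palette.

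The paper resolves this by eliminating \emph{one element at a time while simultaneously retiring one color}, rather than by a one-shot restriction. Random sampling identifies a color $i$ with a large class and an element $j$ popular in that class; $j$ is removed from the ground set and $i$ from the allowed palette, preserving the invariant $|C_\ell|=|X_\ell|-2k+1$. If some later vertex $A$ has the retired color $i$ but $j\notin A$, the algorithm goes back to the round where $j$ was eliminated and samples a partner for $A$. The correctness of both steps rests on a Hilton--Milner-type stability lemma for \emph{stable} sets (a non-trivial intersecting family of stable $k$-subsets of $[n]$ has size at most $k^2\binom{n-k-2}{k-2}$, following Istrate--Bonchis--Craciun), which plays the role that the ordinary Hilton--Milner theorem plays in the Kneser case of Haviv~(2022). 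Nothing in your proposal substitutes for this extremal ingredient, and without it neither the ``no popular element implies many edges'' nor the ``popular element implies old-color vertices have many partners'' step is available. You correctly diagnose that Gale-type sparse arrangements are lurking in the background, but the quantitative tool the paper actually uses is combinatorial (stable-set EKR stability \`a la Talbot), not a constructive refinement of Gale's lemma.
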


A few remarks about Theorem~\ref{thm:AlgoKneserNew} are in order here.
\begin{itemize}
  \item The study of the fixed-parameter tractability of the $\SchrijverP$ problem is motivated by its $\PPA$-hardness proved in~\cite{Haviv22-FISC}.
  \item The $\SchrijverP$ problem is at least as hard as the $\KneserP$ problem. Indeed, the Schrijver graph $S(n,k)$ is an induced subgraph of the Kneser graph $K(n,k)$ with the same chromatic number. Therefore, the $\KneserP$ problem can be solved by applying an algorithm for the $\SchrijverP$ problem to the restriction of a coloring of a Kneser graph to its Schrijver subgraph. This implies that Theorem~\ref{thm:AlgoKneserNew} in particular implies that the $\KneserP$ problem is also fixed-parameter tractable with respect to the parameter $k$.
  \item As mentioned earlier, the Schrijver graph $S(n,k)$ was shown in~\cite{SchrijverKneser78} to be vertex-critical. It follows that for every vertex $A$ of the graph $S(n,k)$, there exists a coloring of its vertices with $n-2k+1$ colors, for which only edges that are incident with $A$ are monochromatic. An algorithm for the $\SchrijverP$ problem, while running on such an input coloring, must be able to find an edge that is incident with this specified vertex $A$. Nevertheless, the algorithm given in Theorem~\ref{thm:AlgoKneserNew} manages to do so in running time much smaller than the number of vertices, provided that $n$ is sufficiently larger than $k$.
  \item Borrowing the terminology of the area of parameterized complexity, our algorithm for the $\SchrijverP$ problem can be viewed as a randomized polynomial Turing kernelization algorithm for the problem (see, e.g.,~\cite[Chapter~22]{KernelBook19}). Namely, the problem of finding a monochromatic edge in a Schrijver graph $S(n,k)$ can essentially be reduced by a randomized efficient algorithm to finding a monochromatic edge in a Schrijver graph $S(n',k)$ for $n' = O(k^4)$. For the $\KneserP$ problem, we provide a tighter analysis, allowing us to efficiently reduce the problem of finding a monochromatic edge in a Kneser graph $K(n,k)$ to finding such an edge in $K(n',k)$ for $n' = O(k^3)$. See Section~\ref{sec:kernel} for the precise details.
  \item The analysis of the algorithm given in Theorem~\ref{thm:AlgoKneserNew} relies on structural properties of induced subgraphs of Schrijver graphs (see Section~\ref{sec:properties}). The proofs involve an idea recently applied by Frankl and Kupavskii~\cite{FrankK20} in the study of maximal degrees in induced subgraphs of Kneser graphs as well as some tools developed by Istrate, Bonchis, and Craciun~\cite{IstrateBC21} in the context of Frege propositional proof systems. An overview of the proof of Theorem~\ref{thm:AlgoKneserNew} is given in Section~\ref{sec:overview}.
\end{itemize}

Our next result provides a simple deterministic algorithm for the $\KneserP$ problem that is particularly useful for Kneser graphs $K(n,k)$ with $k$ close to $n/2$.
Its analysis is based on the chromatic number of Schrijver graphs~\cite{SchrijverKneser78}.
\begin{theorem}\label{thm:KneserAlgSch}
There exists an algorithm that given integers $n$ and $k$ with $n \geq 2k$ and an oracle access to a coloring $c: \binom{[n]}{k} \rightarrow [n-2k+1]$ of the vertices of the Kneser graph $K(n,k)$, returns a monochromatic edge in running time polynomial in $\frac{n}{k} \cdot \binom{n-k-1}{k-1} \leq n^{\min(k,n-2k+1)}$.
\end{theorem}

We proceed by presenting our results on the $\Agree$ problem.
First, as applications of Theorems~\ref{thm:AlgoKneserNew} and~\ref{thm:KneserAlgSch}, using the relation from~\cite{ManurangsiS19} between the $\Agree$ and $\KneserP$ problems, we obtain the following algorithmic results.

\begin{theorem}\label{thm:AlgoIntro1}
There exists a randomized algorithm for the $\Agree$ problem that given an oracle access to an instance with $m$ items and $\ell$ agents ($\ell<m$), runs in time $m^{O(1)} \cdot k^{O(k)}$ for $k = \lceil \frac{m-\ell}{2} \rceil$ and returns a solution with high probability.
\end{theorem}

\begin{theorem}\label{thm:AlgoIntro2}
There exists an algorithm for the $\Agree$ problem that given an oracle access to an instance with $m$ items and $\ell$ agents ($\ell<m$), returns a solution in running time polynomial in $\frac{m}{k} \cdot \binom{m-k-1}{k-1} \leq m^{\min(k,m-2k+1)}$ for $k = \lceil \frac{m-\ell}{2} \rceil$.
\end{theorem}
\noindent
We apply Theorems~\ref{thm:AlgoIntro1} and~\ref{thm:AlgoIntro2} to show that the $\Agree$ problem can be solved in polynomial time for certain families of instances.
By Theorem~\ref{thm:AlgoIntro1}, we obtain a randomized efficient algorithm for instances in which the number of agents $\ell$ is not much smaller than the number of items $m$, namely, for $\ell \geq m - O(\frac{\log m}{\log \log m})$ (see Corollary~\ref{cor:m<=l+log/loglog}).
By Theorem~\ref{thm:AlgoIntro2}, we obtain an efficient algorithm for instances with a constant number of agents (see Corollary~\ref{cor:constant_l}), providing an alternative proof for a result of~\cite{GoldbergHIMS20}.

We finally explore the relations between the $\Agree$ and $\KneserP$ problems.
As already mentioned, there exists an efficient reduction from the $\Agree$ problem to the $\KneserP$ problem~\cite{ManurangsiS19}.
Here we provide a reduction in the opposite direction.
However, for the reduction to be efficient we reduce from a variant of the $\KneserP$ problem with an extended type of queries, which we call {\em subset queries} and define as follows.
For an input coloring $c : \binom{[n]}{k} \rightarrow [n-2k+1]$ of a Kneser graph $K(n,k)$, a subset query is a pair $(i,B)$ of a color $i \in [n-2k+1]$ and a set $B \subseteq [n]$, and the answer on the query $(i,B)$ determines whether $B$ contains a vertex colored $i$, that is, whether there exists a $k$-subset $A \subseteq B$ satisfying $c(A)=i$.
We prove the following result (see Section~\ref{sec:models} for the computational input model of the problems).

\begin{theorem}\label{thm:KneserVSAgree}
There exists a polynomial-time reduction from the $\KneserP$ problem with subset queries to the $\Agree$ problem.
\end{theorem}

We believe that the approach taken in the present paper, of studying problems in $\TFNP$ from the parameterized complexity perspective, warrants further attention in future research (see also, e.g.,~\cite{FHSZ20,BSW23}). This approach might lead to a better understanding of the complexity of such problems by identifying the regimes of the involved parameters for which efficient algorithms exist. From a mathematical point of view, this might help us to gain insights on the regimes of parameters for which the mathematical arguments used in the existence proof of the solutions are essential.

\subsection{Proof Overview of Theorem~\ref{thm:AlgoKneserNew}}\label{sec:overview}

We present here the main ideas of our fixed-parameter algorithm for the $\SchrijverP$ problem.
For convenience, we start by describing a fixed-parameter algorithm for the $\KneserP$ problem and then discuss the modification in the algorithm and in its analysis needed for the $\SchrijverP$ problem.

Suppose that we are given, for $n \geq 2k$, an oracle access to a coloring $c: \binom{[n]}{k} \rightarrow [n-2k+1]$ of the vertices of the Kneser graph $K(n,k)$ with $n-2k+1$ colors. As mentioned before, since the chromatic number of $K(n,k)$ is $n-2k+2$~\cite{LovaszKneser}, the coloring $c$ must have a monochromatic edge. Such an edge can clearly be found by an algorithm that queries the oracle for the colors of all the vertices.
However, the running time of such an algorithm is polynomial in $\binom{n}{k}$, so it is not fixed-parameter with respect to the parameter $k$.

A natural attempt to improve on this running time is to consider a randomized algorithm that picks uniformly and independently at random polynomially many vertices of $K(n,k)$ and checks if any two of them form a monochromatic edge. However, it is not difficult to see that there exist colorings of $K(n,k)$ with $n-2k+1$ colors for which a small fraction of the vertices are involved in all the monochromatic edges, implying that the success probability of this randomized algorithm is negligible on them.
To see this, consider the canonical coloring of $K(n,k)$, in which for every $i \in [n-2k+1]$ the vertices $A \in \binom{[n]}{k}$ with $\min(A)=i$ are colored $i$, and the remaining vertices, those contained in $[n] \setminus [n-2k+1]$, are colored $n-2k+2$. By recoloring the vertices of the last color class with arbitrary colors from $[n-2k+1]$, we get a coloring of $K(n,k)$ with $n-2k+1$ colors such that every monochromatic edge has an endpoint in a collection of $\binom{2k-1}{k}$ vertices.
This implies that the probability that two random vertices chosen uniformly and independently from $\binom{[n]}{k}$ form a monochromatic edge may go to zero faster than any inverse polynomial in $n$.

While the above coloring shows that all the monochromatic edges can involve vertices from a small set, one may notice that this coloring is very well structured, in the sense that each color class is quite close to a trivial intersecting family, i.e., an intersecting family all of whose members share a common element.
This is definitely not a coincidence. It is known that large intersecting families of $k$-subsets of $[n]$ are `essentially' contained in trivial intersecting families.
Indeed, the classical Erd{\"{o}}s--Ko--Rado theorem~\cite{EKR61} asserts that the largest size of an intersecting family of $k$-subsets of $[n]$ is $\binom{n-1}{k-1}$, attained by, and only by, the $n$ largest trivial families (for $n>2k$). Moreover, Hilton and Milner~\cite{HM67} proved a stability result for the Erd{\"{o}}s--Ko--Rado theorem, saying that if an intersecting family of sets from $\binom{[n]}{k}$ is not trivial then its size cannot exceed $\binom{n-1}{k-1}-\binom{n-k-1}{k-1}+1$, which is much smaller than the largest possible size of an intersecting family when $n$ is sufficiently larger than $k$ (see Theorem~\ref{thm:HM} and Remark~\ref{remark:HM}). More recently, it was shown by Dinur and Friedgut~\cite{DinurF09} that every intersecting family can be made trivial by removing not more than $\tilde{c} \cdot \binom{n-2}{k-2}$ of its members for a constant $\tilde{c}$ (see~\cite{Kupavskii18} for an exact $\binom{n-3}{k-2}$ bound on the number of sets that should be removed, provided that $n \geq \tilde{c} \cdot k$ for a constant $\tilde{c}$).
Hence, our strategy for finding a monochromatic edge in $K(n,k)$ is to learn the structure of the large color classes which are close to being intersecting.
We use random samples from the vertex set of the graph in order to identify the common elements of the trivial families that `essentially' contain these color classes.
Roughly speaking, this allows us to repeatedly reduce the size of the ground set $[n]$ of the given Kneser graph and to obtain a small subgraph, whose size depends only on $k$, that is expected to contain a monochromatic edge. Then, such an edge can simply be found by querying the oracle for the colors of all the vertices of this subgraph.

With the above idea in mind, let us consider an algorithm that starts by selecting uniformly and independently at random polynomially many vertices of $K(n,k)$ and queries the oracle for their colors. Among the $n-2k+1$ color classes, there must be a non-negligible one that includes at least $\frac{1}{n-2k+1} \geq \frac{1}{n}$ fraction of the vertices. It can be shown, using the Chernoff--Hoeffding bound, that the samples of the algorithm can be used to learn a color $i \in [n-2k+1]$ of a quite large color class. Moreover, the samples can be used to identify an element $j \in [n]$ that is particularly popular on the vertices colored $i$ (say, that belongs to a constant fraction of them) in case that such an element exists. Now, if the coloring satisfies that (a) there exists an element $j$ that is popular on the vertices colored $i$ and, moreover, (b) this element $j$ belongs to {\em all} the vertices that are colored $i$, then it suffices to focus on the subgraph of $K(n,k)$ induced by the vertices of $\binom{[n] \setminus \{j\}}{k}$. Indeed, condition (b) implies that the restriction of the given coloring to this subgraph uses at most $n-2k$ colors.
Since this subgraph is isomorphic to $K(n-1,k)$, its chromatic number is $n-2k+1$, hence it has a monochromatic edge. By repeatedly applying this procedure, assuming that the conditions (a) and (b) hold in all iterations, we can eliminate elements from the ground set $[n]$ and obtain smaller and smaller Kneser graphs that still have a monochromatic edge. When the ground set becomes sufficiently small, one can go over all the remaining vertices and efficiently find the required edge.
We address now the case where at least one of the conditions (a) and (b) does not hold.

For condition (a), suppose that in some iteration the algorithm identifies a color $i$ that appears on a significant fraction of vertices, but no element of $[n]$ is popular on these vertices.
In this case, one might expect the color class of $i$ to be so far from being intersecting, so that the polynomially many samples would include with high probability a monochromatic edge of vertices colored $i$.
It can be observed that the aforementioned stability results of the Erd{\"{o}}s--Ko--Rado theorem yield that a non-negligible fraction of the vertices of a large color class with no popular element lie on monochromatic edges.
However, in order to catch a monochromatic edge with high probability we need here the stronger requirement, saying that a non-negligible fraction of the {\em pairs} of vertices from this color class form monochromatic edges. This indeed holds for color classes of $K(n,k)$ that include at least, say, $\frac{1}{n}$ fraction of the vertices, provided that $n \geq \poly(k)$. The proof given in~\cite[Lemma~3.1]{Haviv22a} uses the Hilton--Milner theorem~\cite{HM67} and borrows an idea of~\cite{FrankK20} (see also Lemma~\ref{lemma:k^3}).

For condition (b), suppose that in some iteration the algorithm identifies a color $i$ of a large color class and an element $j$ that is popular in its sets but does not belong to all of them. In this case, we can no longer ensure that the final Kneser graph obtained after all iterations has a monochromatic edge, as some of its vertices might be colored $i$.
To handle this situation, we show that every vertex $A \in \binom{[n]}{k}$ with $j \notin A$ has a lot of neighbors colored $i$.
Hence, if the algorithm finds at its final step, or even earlier, a vertex $A$ colored $i$ satisfying $j \notin A$, where $i$ and $j$ are the color and element chosen by the algorithm in a previous iteration, then it goes back to the ground set of this iteration and finds using additional polynomially many random vertices from it a neighbor of $A$ colored $i$, and thus a monochromatic edge.

To summarize, our algorithm for the $\KneserP$ problem repeatedly calls an algorithm, which we refer to as the `element elimination' algorithm, that uses polynomially many random vertices to identify a color $i$ of a large color class. If no element of $[n]$ is popular on this color class then the random samples provide a monochromatic edge with high probability and we are done. Otherwise, the algorithm finds such a popular $j \in [n]$ and focuses on the subgraph obtained by eliminating $j$ from the ground set. This is done as long as the size of the ground set is larger than some polynomial in $k$. After all iterations, the remaining vertices induce a Kneser graph $K(n',k)$ for $n' \leq \poly(k)$, and the algorithm queries for the colors of all of its vertices in time polynomial in $\binom{n'}{k} \leq k^{O(k)}$. If the colors that were chosen through the iterations of the `element elimination' algorithm do not appear in this subgraph, then it must contain a monochromatic edge which can be found by an exhaustive search. Otherwise, this search gives us a vertex $A$ satisfying $c(A) = i$ and $j \notin A$ for a color $i$ associated with an element $j$ by one of the calls to the `element elimination' algorithm. As explained above, given such an $A$ it is possible to efficiently find with high probability a vertex that forms with $A$ a monochromatic edge.
This gives us a randomized algorithm with running time $n^{O(1)} \cdot k^{O(k)}$ that finds a monochromatic edge with high probability.

The above approach for the $\KneserP$ problem can be modified to obtain our algorithm for the $\SchrijverP$ problem given in Theorem~\ref{thm:AlgoKneserNew}.
This requires a refined analysis of the `element elimination' algorithm for Schrijver graphs.
Consider first the case where the input coloring has a large color class that does not have a popular element in its members.
For this case we prove that the selected random vertices include a monochromatic edge with high probability.
In contrast to the analysis used in~\cite{Haviv22a} for Kneser graphs, here we cannot apply the Hilton--Milner theorem~\cite{HM67} that deals with intersecting families of general $k$-subsets of $[n]$. We overcome this issue using a Hilton--Milner-type result for stable sets, i.e., for vertices of the Schrijver graph, borrowing ideas that were applied by Istrate, Bonchis, and Craciun~\cite{IstrateBC21} in the context of Frege propositional proof systems (see Lemma~\ref{lemma:HMstable}). Note that this can be interpreted as an approximate stability result for the analogue of the Erd{\"{o}}s--Ko--Rado theorem for stable sets that was proved in 2003 by Talbot~\cite{Talbot03}.
The Hilton--Milner-type result is combined with an idea of~\cite{FrankK20} to prove that if the vertices of a large color class do not have a popular element, then a pair of vertices chosen uniformly at random from $S(n,k)$ forms a monochromatic edge with a non-negligible probability (see Lemma~\ref{lemma:at_most_gamma} and Corollary~\ref{cor:at_most_gamma}). Hence, picking a polynomial number of them suffices to catch such an edge.

Consider next the case where every large color class of the input coloring has a popular element.
Here, the `element elimination' algorithm identifies with high probability a color $i$ of a large color class and an element $j$ that is popular on its vertices.
If all the vertices colored $i$ include $j$ then we can safely look for a monochromatic edge in the subgraph of $S(n,k)$ induced by the cyclic ordering of $[n]$ without the element $j$, as this means that the size of the ground set and the number of colors are both reduced by $1$. However, for the scenario where the color class of $i$ involves vertices $A$ that do not include $j$, we prove that such an $A$ is disjoint from a random set from the color class of $i$ with a non-negligible probability. Note that the analysis in this case again employs the ideas applied by Istrate et al.~\cite{IstrateBC21} (see Lemma~\ref{lemma:at_least_gamma} and Corollary~\ref{cor:at_least_gamma}).
Then, when such a set $A$ is found by the algorithm, we can go back to the subgraph of the run of the `element elimination' algorithm that identified the color of $A$ and find a neighbor of $A$ from this color class using random samples.
The full description of the algorithm and its analysis are presented in Section~\ref{sec:algo}.

\subsection{Outline}
The rest of the paper is organized as follows.
In Section~\ref{sec:pre}, we gather several definitions and results that will be used throughout the paper.
In Section~\ref{sec:properties}, we present and prove several structural results on induced subgraphs of Kneser and Schrijver graphs.
In Section~\ref{sec:algo}, we present and analyze our randomized fixed-parameter algorithm for the $\SchrijverP$ problem and prove Theorem~\ref{thm:AlgoKneserNew}.
In Section~\ref{sec:AlgoSchr}, we present a simple deterministic algorithm for the $\KneserP$ problem and prove Theorem~\ref{thm:KneserAlgSch}.
In Section~\ref{sec:agree}, we study the $\Agree$ problem. We prove there Theorems~\ref{thm:AlgoIntro1} and~\ref{thm:AlgoIntro2} and derive efficient algorithms for certain families of instances.
We also show that the $\Agree$ problem is at least as hard as the $\KneserP$ problem with subset queries, confirming Theorem~\ref{thm:KneserVSAgree}.

\section{Preliminaries}\label{sec:pre}

\subsection{Computational Models}\label{sec:models}

In this work we consider total search problems whose inputs involve functions that are defined on domains of size exponential in the parameters of the problems.
For example, the input of the $\KneserP$ problem is a coloring $c: \binom{[n]}{k} \rightarrow [n-2k+1]$ of the vertices of the Kneser graph $K(n,k)$ for $n \geq 2k$.
For such problems, one has to specify how the input is given. We consider the following two input models.
\begin{itemize}
  \item In the {\em black-box input model}, an input function is given as an oracle access, so that an algorithm can query the oracle for the value of the function on any element of its domain. This input model is used in the current work to present our algorithmic results, reflecting the fact that the algorithms do not rely on the representation of the input function.
  \item In the {\em white-box input model}, an input function is given by a succinct representation that can be used to efficiently determine the values of the function, e.g., a Boolean circuit or an efficient Turing machine. This input model is appropriate to study the computational complexity of problems, and in particular, to show membership and hardness results with respect to the complexity class $\PPA$.
\end{itemize}

Reductions form a useful tool to show relations between problems.
Let $P_1$ and $P_2$ be total search problems.
We say that $P_1$ is (polynomial-time) reducible to $P_2$ if there exist (polynomial-time) computable functions $f,g$ such that $f$ maps any input $x$ of $P_1$ to an input $f(x)$ of $P_2$, and $g$ maps any pair $(x,y)$ of an input $x$ of $P_1$ and a solution $y$ of $f(x)$ with respect to $P_2$ to a solution of $x$ with respect to $P_1$.
For problems $P_1$ and $P_2$ in the black-box input model, one has to use the notion of {\em black-box reductions}. A (polynomial-time) black-box reduction satisfies that the oracle access needed for the input $f(x)$ of $P_2$ can be simulated by a (polynomial-time) procedure that has an oracle access to the input $x$. In addition, the solution $g(x,y)$ of $x$ in $P_1$ can be computed (in polynomial time) given the solution $y$ of $f(x)$ and the oracle access to the input $x$.
In the current work we will use black-box reductions to obtain algorithmic results for problems in the black-box input model.
For more details on these concepts, we refer the reader to~\cite[Section~2.2]{BeameCEIP98} (see also~\cite[Sections~2 and~4]{DeligkasFH21} for related discussions).

\subsection{Kneser and Schrijver Graphs}\label{sec:KneSchr}

Consider the following definition.
\begin{definition}\label{def:K(F)}
For a family $\calF$ of non-empty sets, let $K(\calF)$ denote the graph on the vertex set $\calF$ in which two vertices are adjacent if they represent disjoint sets.
\end{definition}
\noindent
For a set $X$ and an integer $k$, let $\binom{X}{k}$ denote the family of all $k$-subsets of $X$.
Note that the {\em Kneser graph} $K(n,k)$ can be defined for integers $n$ and $k$ with $n \geq 2k$ as the graph $K(\binom{[n]}{k})$.

A set $A \subseteq [n]$ is said to be {\em stable} if it does not include two consecutive elements modulo $n$, that is, it forms an independent set in the $n$-vertex cycle with the numbering from $1$ to $n$ along the cycle. For integers $n$ and $k$ with $n \geq 2k$, let $\binom{[n]}{k}_{\mathrm{stab}}$ denote the collection of all stable $k$-subsets of $[n]$. The {\em Schrijver graph} $S(n,k)$ is defined as the graph $K(\binom{[n]}{k}_{\mathrm{stab}})$. Equivalently, it is the subgraph of $K(n,k)$ induced by the vertex set $\binom{[n]}{k}_{\mathrm{stab}}$.

For a set $X \subseteq [n]$ consider the natural cyclic ordering of the elements of $X$ induced by that of $[n]$, and let $\binom{X}{k}_{\mathrm{stab}}$ denote the collection of all $k$-subsets of $X$ that do not include two consecutive elements according to this ordering. More formally, letting $j_1 < j_2 < \cdots < j_{|X|}$ denote the elements of $X$, $\binom{X}{k}_{\mathrm{stab}}$ stands for the collection of all independent sets of size $k$ in the cycle on the vertex set $X$ with the numbering $j_1, \ldots, j_{|X|}$ along the cycle. Note that $\binom{X}{k}_{\mathrm{stab}} \subseteq \binom{X'}{k}_{\mathrm{stab}}$ whenever $X \subseteq X' \subseteq [n]$.
Note further that the graph $K(\binom{X}{k}_{\mathrm{stab}})$ is isomorphic to the Schrijver graph $S(|X|,k)$.

The chromatic numbers of the graphs $K(n,k)$ and $S(n,k)$ were determined, respectively, by Lov\'asz~\cite{LovaszKneser} and by Schrijver~\cite{SchrijverKneser78} as follows.

\begin{theorem}[\cite{LovaszKneser,SchrijverKneser78}]\label{thm:KneserS}
For all integers $n \geq 2k$, $\chi(K(n,k)) = \chi(S(n,k)) = n-2k+2$.
\end{theorem}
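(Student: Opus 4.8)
The plan is to prove the equality through its two opposite inequalities. The upper bound $\chi(S(n,k))\le n-2k+2$ is routine, and the lower bound $\chi(S(n,k))\ge n-2k+2$ is the topological argument of B\'ar\'any~\cite{Barany78} as refined by Schrijver~\cite{SchrijverKneser78}.

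For the upper bound I would use that $S(n,k)$ is an induced subgraph of $K(n,k)$ together with Kneser's proper colouring of $K(n,k)$ by $n-2k+2$ colours: colour a $k$-set $A\subseteq[n]$ by $\min(\min A,\,n-2k+2)$. Two $k$-sets sharing a colour $i<n-2k+2$ both contain the element $i$, while two sharing the colour $n-2k+2$ are both contained in the $(2k-1)$-element block $\{n-2k+2,\dots,n\}$; in either case they intersect, hence are non-adjacent. Restricting this colouring to $\binom{[n]}{k}_{\mathrm{stab}}$ yields a proper colouring of $S(n,k)$ with $n-2k+2$ colours.

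The substance is the lower bound. Put $d=n-2k$, so $n=2k+d$, and suppose for contradiction that $c\colon\binom{[n]}{k}_{\mathrm{stab}}\to[d+1]$ is proper. The crux is a strengthening of Gale's lemma: there exist unit vectors $x_1,\dots,x_n\in\R^{d+1}$ such that for every $y$ on the unit sphere $S^d\subseteq\R^{d+1}$, the set $P(y)=\{\,i\in[n]:\langle x_i,y\rangle>0\,\}$ contains a stable $k$-subset of $[n]$, stability being with respect to the cyclic order $1,2,\dots,n$. I would produce such vectors from the moment curve, taking $x_i$ to be the normalisation of $(-1)^i(1,t_i,t_i^2,\dots,t_i^d)$ for distinct reals $t_1<\dots<t_n$; then $\langle x_i,y\rangle$ has the sign of $(-1)^i p(t_i)$ for some polynomial $p$ of degree at most $d$, the sign pattern of $p$ along $t_1,\dots,t_n$ has at most $d$ changes, and within each maximal run of constant sign of $p$ the set $P(y)$ picks out exactly the indices of one fixed parity and hence pairwise non-consecutive indices. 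Upgrading the easy consequence ``$|P(y)|\ge k$, arranged almost alternately'' to ``$P(y)$ contains a genuine stable $k$-set'' --- accounting for the collisions at the boundaries between runs and for the cyclic identification of $n$ with $1$, which is where the parity relations forced by $n=2k+d$ are used --- is the delicate point, and the place where Schrijver strengthens B\'ar\'any's proof of Kneser's conjecture; I expect this to be the main obstacle.

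Given the configuration, the remainder is the standard Lyusternik--Shnirelman--Borsuk packaging. For each $i\in[d+1]$ set $A_i=\{\,y\in S^d:P(y)\text{ contains a stable }k\text{-set coloured }i\,\}$. Each $A_i$ is open, since membership is witnessed by finitely many strict inequalities $\langle x_j,y\rangle>0$; and the $A_i$ cover $S^d$, because $P(y)$ always contains a stable $k$-set and that set carries one of the $d+1$ colours. By the Lyusternik--Shnirelman--Borsuk theorem --- equivalently, the Borsuk--Ulam theorem~\cite{Borsuk33} --- a cover of $S^d$ by $d+1$ open sets has a member containing an antipodal pair, so $y,-y\in A_i$ for some $i$ and some $y\in S^d$. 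Then $P(y)$ contains a stable $k$-set $A$ with $c(A)=i$ and $P(-y)$ contains a stable $k$-set $B$ with $c(B)=i$; the open hemispheres centred at $y$ and at $-y$ being disjoint, $A$ and $B$ are disjoint and thus adjacent in $S(n,k)$, contradicting that $c$ is proper. Hence $S(n,k)$ has no proper colouring with $d+1=n-2k+1$ colours, which with the upper bound gives $\chi(S(n,k))=n-2k+2$.
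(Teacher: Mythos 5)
The paper itself does not prove Theorem~\ref{thm:KneserS}; it states it as a cited result from~\cite{SchrijverKneser78}, so there is no in-paper proof for you to match. That said, your proposal faithfully reconstructs the standard B\'ar\'any--Schrijver argument. The upper bound, colouring $A$ by $\min(\min A,\, n-2k+2)$, is correct: classes $i<n-2k+2$ are trivially intersecting (all their members contain $i$), and class $n-2k+2$ lives inside the $(2k-1)$-element tail $\{n-2k+2,\dots,n\}$, where any two $k$-sets meet. The Lyusternik--Schnirelmann--Borsuk packaging at the end is also clean: each $A_i$ is open as a finite union of finite intersections of open half-spaces, the $A_i$ cover $S^d$ because $P(y)$ always contains a coloured stable $k$-set, and an antipodal pair in one $A_i$ yields disjoint same-coloured stable $k$-sets, i.e.\ a monochromatic edge.

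The genuine gap --- and to your credit you flag it --- is the stable strengthening of Gale's lemma, which you state but do not prove. The cardinality bound $|P(y)|\ge k$ comes easily: with $r\le d+1$ sign-blocks of $p$ along $t_1<\cdots<t_n$, one gets $|P(y)|\ge\sum_j\lfloor\ell_j/2\rfloor\ge (n-(d+1))/2=k-\tfrac{1}{2}$, hence $\ge k$. But $P(y)$ need not itself be stable: across a boundary where $p$ changes sign, the alternating-parity selection can pick two consecutive indices, and the cyclic pair $\{n,1\}$ requires separate treatment. For instance, with $n=5$, $k=2$, $d=1$, one root of $p$ between $t_2$ and $t_3$, and $p>0$ on $\{t_1,t_2\}$, one finds $P(y)=\{2,3,5\}$, which contains the adjacent pair $2,3$ yet does contain the stable $2$-set $\{3,5\}$. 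Showing that $P(y)$ always \emph{contains} a stable $k$-set --- not merely that $|P(y)|\ge k$ --- is precisely Schrijver's contribution beyond B\'ar\'any's, and it is the one piece of real work that your write-up leaves as an acknowledged obstacle. As a blueprint the plan is sound; as a proof, that lemma still needs to be supplied.
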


%A family $\calF \subseteq \binom{[n]}{k}$ of $k$-subsets of $[n]$ is called {\em intersecting} if for every two sets $F_1, F_2 \in \calF$ it holds that $F_1 \cap F_2 \neq \emptyset$. Note that such a family forms an independent set in the graph $K(n,k)$. If the members of $\calF$ share a common element, then we say that the intersecting family $\calF$ is {\em trivial}.
%Hilton and Milner~\cite{HM67} proved the following stability result for the Erd{\"{o}}s--Ko--Rado theorem, providing an upper bound on the size of any non-trivial intersecting family.
%\begin{theorem}[Hilton--Milner Theorem~\cite{HM67}]\label{thm:HM}
%For all integers $k \geq 2$ and $n \geq 2k$, every non-trivial intersecting family of $k$-subsets of $[n]$ has size at most $\binom{n-1}{k-1}-\binom{n-k-1}{k-1}+1$.
%\end{theorem}
%\begin{remark}\label{remark:HM}
%The bound given in Theorem~\ref{thm:HM} is tight.
%To see this, for an arbitrary $k$-subset $F$ of $[n]$ and for an arbitrary element $i \notin F$, consider the family $\calF = \{ A \in \binom{[n]}{k} \mid A \cap F \neq \emptyset,~i \in A\} \cup \{F\}$.
%The family $\calF$ is intersecting and non-trivial, and its size coincides with the bound given in Theorem~\ref{thm:HM}. Note that $|\calF| \leq k \cdot \binom{n-2}{k-2}$, provided that $k \geq 3$.
%\end{remark}

The computational search problems associated with the Kneser and Schrijver graphs are defined as follows.
\begin{definition}\label{def:KneserSProblems}
In the computational $\KneserP$ problem, the input is a coloring $c: \binom{[n]}{k} \rightarrow [n-2k+1]$ of the vertices of the Kneser graph $K(n,k)$ with $n-2k+1$ colors for integers $n$ and $k$ with $n \geq 2k$, and the goal is to find a monochromatic edge, i.e., $A,B \in \binom{[n]}{k}$ satisfying $A \cap B = \emptyset$ and $c(A) = c(B)$.
In the computational $\SchrijverP$ problem, the input is a coloring $c: \binom{[n]}{k}_{\mathrm{stab}} \rightarrow [n-2k+1]$ of the vertices of the Schrijver graph $S(n,k)$ with $n-2k+1$ colors for integers $n$ and $k$ with $n \geq 2k$, and the goal is to find a monochromatic edge.
In the black-box input model, the input coloring is given as an oracle access that for a vertex $A$ returns its color $c(A)$.
In the white-box input model, the input coloring is given by a Boolean circuit that for a vertex $A$ computes its color $c(A)$.
\end{definition}
\noindent
The existence of a solution for every instance of the $\KneserP$ and $\SchrijverP$ problems follows from Theorem~\ref{thm:KneserS}.

\subsection{Intersecting Families}

For integers $n$ and $k$ with $n \geq 2k$, let $\calF \subseteq \binom{[n]}{k}$ be a family of $k$-subsets of $[n]$.
We call $\calF$ {\em intersecting} if for every two sets $F_1, F_2 \in \calF$ it holds that $F_1 \cap F_2 \neq \emptyset$.
Note that such a family forms an independent set in the graph $K(n,k)$.
The Erd{\"{o}}s--Ko--Rado theorem~\cite{EKR61} asserts that every intersecting family $\calF \subseteq \binom{[n]}{k}$ satisfies $|\calF| \leq \binom{n-1}{k-1}$.
This bound is tight and is attained, for each $i \in [n]$, by the family $\{ A \in \binom{[n]}{k} \mid i \in A \}$.
An intersecting family of sets is said to be {\em trivial} if its members share a common element.
Hilton and Milner~\cite{HM67} proved the following stability result for the Erd{\"{o}}s--Ko--Rado theorem, providing an upper bound on the size of any non-trivial intersecting family.
\begin{theorem}[Hilton--Milner Theorem~\cite{HM67}]\label{thm:HM}
For all integers $k \geq 2$ and $n \geq 2k$, every non-trivial intersecting family of $k$-subsets of $[n]$ has size at most $\binom{n-1}{k-1}-\binom{n-k-1}{k-1}+1$.
\end{theorem}
\begin{remark}\label{remark:HM}
The bound given in Theorem~\ref{thm:HM} is tight.
To see this, for an arbitrary $k$-subset $F$ of $[n]$ and for an arbitrary element $i \notin F$, consider the family $\calF = \{ A \in \binom{[n]}{k} \mid A \cap F \neq \emptyset,~i \in A\} \cup \{F\}$.
The family $\calF$ is intersecting and non-trivial, and its size coincides with the bound given in Theorem~\ref{thm:HM}. Note that $|\calF| \leq k \cdot \binom{n-2}{k-2}$, provided that $k \geq 3$.
\end{remark}

\subsection{Chernoff--Hoeffding Bound}
We need the following concentration result (see, e.g.,~\cite[Theorem~2.1]{McDiarmid98}).
\begin{theorem}[Chernoff--Hoeffding Bound]\label{thm:chernoff}
Let $0 < p < 1$, let $X_1,\ldots,X_m$ be $m$ independent binary random variables satisfying $\Prob{}{X_i=1}=p$ and $\Prob{}{X_i=0}=1-p$ for all $i$, and put $\overline{X} = \frac{1}{m} \cdot \sum_{i=1}^{m}{X_i}$. Then, for any $\mu \geq 0$,
\[ \Prob{}{|\overline{X}-p| \geq \mu} \leq 2 \cdot e^{-2m \mu^2}.\]
\end{theorem}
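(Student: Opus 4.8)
The statement is Hoeffding's inequality for sums of independent Bernoulli variables, and the plan is to reproduce the standard exponential-moment (Chernoff method) argument. First I would reduce to a one-sided bound: by the union bound it suffices to prove $\Prob{}{\overline X - p \geq \mu} \leq e^{-2m\mu^2}$ and $\Prob{}{\overline X - p \leq -\mu} \leq e^{-2m\mu^2}$ separately, and the second inequality follows from the first applied to the independent Bernoulli$(1-p)$ variables $1 - X_1,\ldots,1 - X_m$. So the whole content lies in the upper tail.

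For the upper tail I would fix $t > 0$, rewrite the event $\{\overline X - p \geq \mu\}$ as $\{\sum_{i=1}^{m}(X_i - p) \geq m\mu\}$, and apply Markov's inequality to the exponentiated sum, using independence to factor the expectation:
\[
\Prob{}{\overline X - p \geq \mu}
\;\leq\; e^{-tm\mu}\,\expectation\big[e^{t\sum_{i=1}^{m}(X_i - p)}\big]
\;=\; e^{-tm\mu}\big(\expectation\big[e^{t(X_1 - p)}\big]\big)^{m}.
\]
Next I would bound the single-variable factor by Hoeffding's lemma: for any random variable $Y$ with $\expectation Y = 0$ and $Y \in [a,b]$ almost surely, $\expectation\big[e^{tY}\big] \leq e^{t^2(b-a)^2/8}$; applying it to $Y = X_1 - p \in [-p,\,1-p]$, where $b - a = 1$, gives $\expectation\big[e^{t(X_1-p)}\big] \leq e^{t^2/8}$. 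Substituting yields $\Prob{}{\overline X - p \geq \mu} \leq e^{-tm\mu + mt^2/8}$, and I would finish by optimizing the exponent over $t > 0$: it is minimized at $t = 4\mu$, where its value is $-2m\mu^2$. This gives the one-sided bound, and summing the two one-sided bounds gives the stated inequality.

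The only step that is not pure bookkeeping is Hoeffding's lemma, which I would prove by the usual cumulant estimate. Setting $\psi(t) = \log\expectation\big[e^{tY}\big]$, one has $\psi(0) = 0$ and $\psi'(0) = \expectation Y = 0$, while $\psi''(t)$ equals the variance of $Y$ under the exponentially tilted law whose density is proportional to $e^{tY}$; since that law is supported in $[a,b]$, its variance is at most $\big((b-a)/2\big)^2$. A second-order Taylor expansion with Lagrange remainder then gives $\psi(t) \leq \tfrac{t^2}{2}\cdot\tfrac{(b-a)^2}{4} = \tfrac{t^2(b-a)^2}{8}$, which is the lemma. I expect this tilted-variance bound to be the part that needs the most care; everything else is the routine Chernoff-method computation, and the constant $2$ in the exponent of the theorem comes out exactly because Hoeffding's lemma is sharp for mean-zero variables supported on an interval of length $b-a$.
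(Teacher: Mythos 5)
The paper does not prove this statement; it cites it as a known result (Theorem~2.1 of McDiarmid's survey), so there is no in-paper proof to compare against. Your argument is a correct and complete rendition of the standard proof: union bound to reduce to the one-sided tail, the Chernoff exponential-moment step with Markov's inequality and independence, Hoeffding's lemma applied to $Y = X_1 - p \in [-p,\,1-p]$ (so $b-a=1$), optimization at $t = 4\mu$ to get the exponent $-2m\mu^2$, and the tilted-variance proof of Hoeffding's lemma via Popoviciu's bound $\operatorname{Var}(Y) \le \bigl((b-a)/2\bigr)^2$ for a variable supported on $[a,b]$. All steps check out, including the symmetry reduction for the lower tail via $1 - X_i \sim \mathrm{Bernoulli}(1-p)$.
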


\section{Induced Subgraphs of Kneser and Schrijver Graphs}\label{sec:properties}

In this section, we provide a couple of lemmas on induced subgraphs of Schrijver graphs that will play a central role in the analysis of our fixed-parameter algorithm for the $\SchrijverP$ problem. We further discuss some properties of induced subgraphs of Kneser graphs.
We start with some preliminary claims related to counting stable sets.

\subsection{Counting Stable Sets}

We need the following simple claim.
\begin{claim}\label{claim:stable_k_n}
For integers $k \geq 1$ and $n \geq 2k$, the number of $k$-subsets of $[n]$ with no two consecutive elements (allowing both $1$ and $n$ to be in the subsets) is $\binom{n-k+1}{k}$.
\end{claim}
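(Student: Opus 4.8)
The plan is to set up an explicit bijection between the family of $k$-subsets of $[n]$ with no two (linearly) consecutive elements and the family $\binom{[n-k+1]}{k}$ of all $k$-subsets of $[n-k+1]$, whose size is $\binom{n-k+1}{k}$ by definition. Given a $k$-subset $A = \{a_1 < a_2 < \cdots < a_k\}$ of $[n]$ satisfying $a_{i+1} \geq a_i + 2$ for all $i \in [k-1]$, I would map it to the set $\varphi(A)$ whose $i$th smallest element is $a_i - (i-1)$, i.e. $\varphi(A) = \{a_1,\, a_2-1,\, a_3-2,\, \ldots,\, a_k-(k-1)\}$. Intuitively, $\varphi$ removes the ``mandatory gaps'' between the chosen elements.

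First I would check that $\varphi(A)$ is a legitimate element of $\binom{[n-k+1]}{k}$. The no-two-consecutive condition gives $a_{i+1} - i \geq (a_i + 2) - i = (a_i - (i-1)) + 1$, so the values $a_i - (i-1)$ are strictly increasing in $i$ and in particular pairwise distinct, whence $|\varphi(A)| = k$. Moreover $a_1 - 0 \geq 1$ and $a_k - (k-1) \leq n - (k-1) = n-k+1$, so all $k$ elements lie in $[n-k+1]$. Next I would produce the inverse: for $B = \{b_1 < \cdots < b_k\} \subseteq [n-k+1]$, set $\psi(B) = \{b_1,\, b_2+1,\, \ldots,\, b_k+(k-1)\}$, the set whose $i$th smallest element is $b_i + (i-1)$. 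Since $b_{i+1} \geq b_i + 1$ we get $(b_{i+1}+i) - (b_i+(i-1)) \geq 2$, so $\psi(B)$ has no two consecutive elements, and $b_k + (k-1) \leq (n-k+1) + (k-1) = n$, so $\psi(B)$ is a $k$-subset of $[n]$ of the required type. A one-line computation shows $\psi\circ\varphi$ and $\varphi\circ\psi$ are the identity maps on their respective domains, so $\varphi$ is a bijection and the count is $\binom{n-k+1}{k}$.

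There is essentially no real obstacle here; the only thing requiring care is the bookkeeping that both $\varphi$ and $\psi$ land in the correct ground sets and are mutually inverse, together with checking the degenerate regime (when $n-k+1 < k$ both sides are $0$, and when $n = 2k-1$ both sides equal $1$, realized by $\{1,3,\ldots,2k-1\}$). An alternative route is induction on $n$: the subsets not containing $n$ are exactly the $k$-subsets of $[n-1]$ with no two consecutive, counted by $\binom{n-k}{k}$, while those containing $n$ (hence not $n-1$) correspond to $(k-1)$-subsets of $[n-2]$ with no two consecutive, counted by $\binom{n-k}{k-1}$; Pascal's identity $\binom{n-k}{k} + \binom{n-k}{k-1} = \binom{n-k+1}{k}$ then closes the induction. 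I would present the bijective proof, as it is shorter and avoids managing base cases.
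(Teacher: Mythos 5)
Your proof is correct and complete: the gap-removal map $\varphi(A)=\{a_i-(i-1):i\in[k]\}$ together with its explicit inverse $\psi$ gives a clean bijection with $\binom{[n-k+1]}{k}$, and you verify well-definedness on both sides. The paper proves the same count by a different (though closely related) encoding: it identifies a subset with its characteristic vector in $\{0,1\}^n$, reads the $k$ ones as separators creating $k+1$ bins and the $n-k$ zeros as balls, observes that the no-two-consecutive condition forces the $k-1$ interior bins to be nonempty, removes one ball from each of those bins, and then counts the resulting weak compositions of $n-2k+1$ into $k+1$ parts as $\binom{n-k+1}{k}$. Both arguments remove the same ``mandatory gaps''; yours does it by shifting the elements of the set directly, while the paper's does it at the level of compositions and then invokes the stars-and-bars formula. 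Your bijective version is arguably more self-contained (it does not reuse the composition count), and your alternative Pascal-identity induction is also a valid route; either would serve. One minor point: the degenerate regimes you mention ($n<2k-1$ giving $0$, and $n=2k-1$ giving $1$) lie outside the hypothesis $n\geq 2k$, so that discussion, while harmless, is not needed here.
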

\begin{proof}
Identify the subsets of $[n]$ with their characteristic vectors in $\{0,1\}^n$, and in every such vector, interpret the zeros as balls and the ones as separations between bins. It follows that every $k$-subset of $[n]$ with no two consecutive elements corresponds to a partition of $n-k$ identical balls into $k+1$ bins, where all bins but the first and last are not empty. The number of those partitions is equal to the number of partitions of $n-2k+1$ identical balls into $k+1$ bins, which is $\binom{n-k+1}{k}$.
\end{proof}

Using Claim~\ref{claim:stable_k_n}, we derive the number of vertices in Schrijver graphs.

\begin{claim}\label{claim:|V(S(n,k))|}
For integers $k \geq 2$ and $n \geq 2k$, the number of vertices in $S(n,k)$ is $\frac{n}{k} \cdot \binom{n-k-1}{k-1}$.
\end{claim}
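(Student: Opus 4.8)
The plan is to count the vertices of $S(n,k)$ — that is, the stable $k$-subsets of $[n]$ in the cyclic sense — by comparing them with the ``linear'' stable sets already enumerated in Claim~\ref{claim:stable_k_n}. A $k$-subset $A \subseteq [n]$ is a vertex of $S(n,k)$ exactly when it contains no two elements consecutive modulo $n$; equivalently, when $A$ contains no pair $\{i,i+1\}$ with $i \in [n-1]$ \emph{and} $A$ does not contain the pair $\{1,n\}$. The $k$-subsets satisfying only the first condition are precisely the ones counted in Claim~\ref{claim:stable_k_n}, of which there are $\binom{n-k+1}{k}$. Hence it suffices to show that the number of such sets that \emph{additionally} contain both $1$ and $n$ equals $\binom{n-k-1}{k-2}$, and then subtract.

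So suppose $A$ is a $k$-subset of $[n]$ with no two linearly consecutive elements and with $\{1,n\} \subseteq A$. Since $1,n \in A$, the set $A$ can contain neither $2$ nor $n-1$, so its remaining $k-2$ elements form a $(k-2)$-subset of the interval $\{3,4,\ldots,n-2\}$ with no two (linearly) consecutive elements; conversely, any such $(k-2)$-subset yields, upon adjoining $1$ and $n$, a set of the required type, and this correspondence is a bijection. The interval $\{3,\ldots,n-2\}$ is order-isomorphic to $[n-4]$, and this isomorphism preserves the ``no two consecutive'' property, so by Claim~\ref{claim:stable_k_n} (applied with $n-4$ and $k-2$ in place of $n$ and $k$, using $n-4 \geq 2(k-2)$, which follows from $n \geq 2k$) the number of such $(k-2)$-subsets is $\binom{(n-4)-(k-2)+1}{k-2} = \binom{n-k-1}{k-2}$. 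Subtracting from $\binom{n-k+1}{k}$ gives the claimed count.

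I do not expect a genuine obstacle here: the argument is essentially bookkeeping, and the only points that need care are that the cyclic stability condition decomposes cleanly into the linear condition together with the single extra constraint ``not both $1$ and $n$,'' and that forcing $1,n \in A$ eliminates exactly the two elements $2$ and $n-1$, leaving an interval of length $n-4$ on which Claim~\ref{claim:stable_k_n} applies. The one edge case worth a remark is $k=2$, where the ``remaining'' part of $A$ is empty: there the bad sets are exactly $\{1,n\}$, a single set, matching $\binom{n-3}{0}=1$, consistently with the formula (and with the counting argument of Claim~\ref{claim:stable_k_n}, which remains valid when zero elements are to be chosen).
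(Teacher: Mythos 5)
Your argument is correct and matches the paper's proof essentially line for line: both count linearly stable $k$-subsets via Claim~\ref{claim:stable_k_n} and then subtract those containing both $1$ and $n$, identifying the latter with $(k-2)$-subsets of an interval of length $n-4$. Your additional remark on the $k=2$ edge case (where Claim~\ref{claim:stable_k_n} is invoked with second parameter $0$) is a small but welcome bit of extra care that the paper's proof glosses over.
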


\begin{proof}
Recall that the vertex set $\binom{[n]}{k}_{\mathrm{stab}}$ of $S(n,k)$ is the collection of all $k$-subsets of $[n]$ with no two consecutive elements modulo $n$.
By Claim~\ref{claim:stable_k_n}, the number of $k$-subsets of $[n]$ with no two consecutive elements (allowing both $1$ and $n$ to be in the subsets) is $\binom{n-k+1}{k}$.
To determine the number of vertices in $S(n,k)$, we consider the following two cases.
The number of vertices that do not include $1$ nor $n$ is equal to the number of $k$-subsets of $[n-2]$ with no two consecutive elements and thus to $\binom{n-k-1}{k}$.
The number of vertices that include either $1$ or $n$ is twice the number of $(k-1)$-subsets of $[n-3]$ with no two consecutive elements, and the latter is equal to $\binom{n-k-1}{k-1}$. It thus follows that the number of vertices in $S(n,k)$ is
\begin{eqnarray*}
\binom{n-k-1}{k}+2 \cdot \binom{n-k-1}{k-1} = \frac{n}{k} \cdot \binom{n-k-1}{k-1},
\end{eqnarray*}
where the equality holds by a straightforward calculation.
This completes the proof.
\end{proof}

The following claim employs an argument of~\cite{IstrateBC21}.
We include its proof for completeness.

\begin{claim}\label{claim:stable_i_j}
For integers $k \geq 2$ and $n \geq 2k$ and for every two distinct integers $a,b \in [n]$, the number of stable $k$-subsets $F$ of $[n]$ satisfying $\{a,b\} \subseteq F$ is at most $\binom{n-k-2}{k-2}$.
\end{claim}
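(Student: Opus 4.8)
The plan is to reduce the count to one about independent sets in disjoint unions of paths, and then to replace such a union by a single, slightly longer path. Throughout, write $P_m$ for the path on $m$ vertices (with $P_0$ the empty graph), and recall that by Claim~\ref{claim:stable_k_n} the number of independent sets of size $t$ in $P_m$ equals $\binom{m-t+1}{t}$, since a $t$-subset of $[m]$ with no two consecutive elements is exactly an independent set of size $t$ in the path $1-2-\cdots-m$. First dispose of the trivial case: if $a$ and $b$ are consecutive modulo $n$, no stable set contains both, so the number in question is $0 \le \binom{n-k-2}{k-2}$ (the binomial coefficient is nonnegative because $n \ge 2k$). So assume henceforth that $a$ and $b$ are not consecutive modulo $n$.

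In that case, deleting $a$ and $b$ from the $n$-cycle splits it into two arcs on $p \ge 1$ and $q \ge 1$ vertices with $p+q = n-2$. If $F$ is a stable $k$-subset of $[n]$ with $\{a,b\}\subseteq F$, then $F\setminus\{a,b\}$ is a set of size $k-2$ that avoids $a$, $b$, and all of their neighbours on the cycle; the vertices left after deleting $a$, $b$ and their neighbours form a disjoint union $P_{m_1}\sqcup P_{m_2}$ of two paths, where $m_1=\max(p-2,0)$ and $m_2=\max(q-2,0)$ (an arc of $p$ vertices loses its two endpoints, or its unique vertex when $p\le 2$). Conversely, any independent set of size $k-2$ in $P_{m_1}\sqcup P_{m_2}$, together with $\{a,b\}$, is stable. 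Hence the quantity to be bounded equals the number of independent sets of size $k-2$ in $P_{m_1}\sqcup P_{m_2}$.

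The key step is a comparison lemma: for all $m_1,m_2\ge 0$ and all $t$, the number of independent sets of size $t$ in $P_{m_1}\sqcup P_{m_2}$ is at most the number of such sets in $P_{m_1+m_2+1}$. I would prove this by an explicit injection — embed $P_{m_1}$ on the first $m_1$ vertices of $P_{m_1+m_2+1}$ and $P_{m_2}$ on its last $m_2$ vertices, leaving the vertex in position $m_1+1$ unused as a ``spacer''; an independent set of the disjoint union stays independent after this relabelling precisely because its two parts are separated by the spacer, and the map is clearly injective. Combining this with the count formula above, take $t=k-2$. When $p,q\ge 3$ we have $m_1,m_2\ge 1$ and $m_1+m_2=n-6$, so the count is at most $\binom{(m_1+m_2+1)-(k-2)+1}{k-2}=\binom{n-k-2}{k-2}$. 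When $p\le 2$ (or, symmetrically, $q\le 2$) one path is empty, say $m_1=0$, and the count is $\binom{m_2-(k-2)+1}{k-2}$ with $m_2\le n-5$ (the handful of very small $n$ being immediate), which is again at most $\binom{n-k-2}{k-2}$ because $\binom{x}{k-2}$ is nondecreasing in $x$.

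I expect the only delicate point to be the bookkeeping in the second paragraph: correctly describing the graph that remains after removing $a$, $b$ and their neighbours for every relative position of $a$ and $b$, and in particular noticing that one (or, for tiny $n$, both) of the two residual paths can be empty, while checking that the stated bound still holds in those degenerate positions. The combinatorial heart of the argument, the spacer injection, is elementary.
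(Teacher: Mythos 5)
Your proof is correct and takes essentially the same approach as the paper: both reduce the count to independent sets of size $k-2$ in the graph obtained by deleting $a$, $b$, and their cycle-neighbours, and both bound that count via the same shift/spacer injection into a single path of $n-5$ vertices (the paper phrases this as shifting the elements of $\{3,\ldots,b-2\}$ by $2$ so that the two arcs become $\{5,\ldots,n-1\}\setminus\{b+1\}$, with $b+1$ playing the role of your spacer). Your packaging of this step as a general comparison lemma for $P_{m_1}\sqcup P_{m_2}$ versus $P_{m_1+m_2+1}$ is a clean, uniform way to present what the paper does by explicit casework on the position of $b$.
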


\begin{proof}
It is easy to see that the statement of the claim holds for $k=2$, hence we assume from now on that $k \geq 3$. Suppose without loss of generality that $a=1$. For an integer $b \in [n] \setminus \{1\}$, consider the stable $k$-subsets $F$ of $[n]$ satisfying $\{1,b\} \subseteq F$.

For $b \in \{2,n\}$, there are no such sets, hence the required bound trivially holds.

For $b \in \{3,n-1\}$, assume without loss of generality that $b=3$, and observe that the desired sets $F$ have the form $F' \cup \{1,3\}$, where $F'$ is a $(k-2)$-subset of $\{5, \ldots, n-1\}$ with no two consecutive elements.
In case that $n \geq 2k+1$, we have $n-5 \geq 2 \cdot (k-2)$, so using $k \geq 3$, we can apply Claim~\ref{claim:stable_k_n} to obtain that the number of sets is
\[\binom{(n-5)-(k-2)+1}{k-2} = \binom{n-k-2}{k-2},\]
and therefore the required bound holds.
For $n=2k$, notice that there is a single $F'$ as above, so the bound holds as well.

Similarly, for $b \in \{4,n-2\}$, assume without loss of generality that $b=4$, for which the desired sets $F$ have the form $F' \cup \{1,4\}$, where $F'$ is a $(k-2)$-subset of $\{6, \ldots, n-1\}$ with no two consecutive elements. As before, for $n \geq 2k+2$, we can apply Claim~\ref{claim:stable_k_n} to obtain that the number of sets is
\[\binom{(n-6)-(k-2)+1}{k-2} = \binom{n-k-3}{k-2} \leq \binom{n-k-2}{k-2}.\]
The bound also holds for $n \in \{2k,2k+1\}$, for which there is at most one $F'$ as above.

Otherwise, for $b \in \{5,\ldots,n-3\}$, the desired sets $F$ have the form $F' \cup \{1,b\}$, where $F'$ is a $(k-2)$-subset of $\{3, \ldots, b-2\} \cup \{b+2,\ldots,n-1\}$ with no two consecutive elements. By shifting the elements of $F'$ in the range $\{3, \ldots, b-2\}$ by $2$, it follows that the number of those sets $F'$ is equal to the number of $(k-2)$-subsets of $\{5, \ldots,n-1\}$ with no two consecutive elements and without the element $b+1$. As shown above (for $b=3$), the number of these subsets, even without the requirement not to include $b+1$, does not exceed the required bound, so we are done.
\end{proof}

\begin{remark}
It follows from the above proof that the bound in Claim~\ref{claim:stable_i_j} is tight in certain cases (e.g., for $a=1$, $b=3$).
\end{remark}

The following result stems from Claim~\ref{claim:stable_i_j} and can be viewed as an approximate variant of the Hilton--Milner theorem of~\cite{HM67} for stable sets (see~\cite{IstrateBC21}).
\begin{lemma}\label{lemma:HMstable}
For integers $k \geq 2$ and $n \geq 2k$, every non-trivial intersecting family $\calF$ of stable $k$-subsets of $[n]$ satisfies $|\calF| \leq k^2 \cdot \binom{n-k-2}{k-2}$.
\end{lemma}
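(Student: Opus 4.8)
The plan is to reduce the bound to a pair-covering estimate and then invoke Claim~\ref{claim:stable_i_j}. Concretely, I would exhibit a collection $\mathcal{P}$ of at most $k^2$ two-element subsets of $[n]$ with the property that every $F\in\calF$ contains at least one member of $\mathcal{P}$. Since by Claim~\ref{claim:stable_i_j} each such pair is contained in at most $\binom{n-k-2}{k-2}$ stable $k$-subsets of $[n]$, a union bound over $\mathcal{P}$ immediately gives $|\calF|\le |\mathcal{P}|\cdot\binom{n-k-2}{k-2}\le k^2\binom{n-k-2}{k-2}$. This mirrors the standard proof of the Hilton--Milner theorem of~\cite{HM67}, adapted to stable sets following~\cite{IstrateBC21}.

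To construct $\mathcal{P}$, I would first fix an arbitrary $A\in\calF$. If every $F\in\calF$ satisfies $|F\cap A|\ge 2$, then $\mathcal{P}=\binom{A}{2}$ works and $|\mathcal{P}|=\binom{k}{2}\le k^2$, so assume instead that some $B\in\calF$ has $|A\cap B|=1$; write $A\cap B=\{a^*\}$ (the intersection is nonempty because $\calF$ is intersecting). Here the \emph{non-triviality} of $\calF$ enters: $a^*$ is not contained in all members, so I may also fix $C\in\calF$ with $a^*\notin C$. Note $|B\setminus A|=k-1$, and that $\emptyset\neq C\cap A\subseteq A\setminus\{a^*\}$, whence $|C\setminus A|\le k-1$ as well.

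Now take an arbitrary $F\in\calF$ and split into two cases. If $F\cap A$ contains some element $a\neq a^*$, then $a\notin B$, so any $b\in F\cap B$ satisfies $b\neq a$ and $\{a,b\}\subseteq F$ with $a\in A\setminus\{a^*\}$ and $b\in\{a^*\}\cup(B\setminus A)$; thus $\{a,b\}$ lies in the first or second part of $\mathcal{P}$ below. If instead $F\cap A=\{a^*\}$, pick $c\in F\cap C$; one checks $c\notin A$ (otherwise $c\in F\cap A=\{a^*\}$, contradicting $a^*\notin C$), so $\{a^*,c\}\subseteq F$ with $c\in C\setminus A$. Hence every $F\in\calF$ contains a pair from
\[
\mathcal{P}=\bigl\{\{a,a^*\}:a\in A\setminus\{a^*\}\bigr\}\cup\bigl\{\{a,b\}:a\in A\setminus\{a^*\},\ b\in B\setminus A\bigr\}\cup\bigl\{\{a^*,c\}:c\in C\setminus A\bigr\},
\]
and $|\mathcal{P}|\le (k-1)+(k-1)^2+(k-1)=k^2-1$, which together with Claim~\ref{claim:stable_i_j} yields $|\calF|\le(k^2-1)\binom{n-k-2}{k-2}\le k^2\binom{n-k-2}{k-2}$.

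The only genuinely delicate point is the case $F\cap A=\{a^*\}$: such sets need not contain any prescribed element outside $A$, and it is precisely here that non-triviality is indispensable, via the auxiliary set $C$ avoiding $a^*$ — without it, the family of all stable $k$-subsets through $a^*$ would violate the claimed bound. Everything else is bookkeeping; no estimate beyond Claim~\ref{claim:stable_i_j} and the trivial union bound is needed.
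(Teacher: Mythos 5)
Your proof is correct, and at its core it uses the same mechanism as the paper: exhibit a family $\mathcal{P}$ of at most $k^2$ two-element subsets of $[n]$ so that every $F\in\calF$ contains some member of $\mathcal{P}$, then apply Claim~\ref{claim:stable_i_j} and a union bound. Where you differ is in how $\mathcal{P}$ is built. The paper's construction is more uniform and a bit shorter: writing $A=\{a_1,\ldots,a_k\}$, it picks for each $t\in[k]$ a set $B_t\in\calF$ with $a_t\notin B_t$ (this is where non-triviality enters), and notes that any $F\in\calF$ contains some $a_t$ and, being forced to meet $B_t$, also some $b\in B_t$ with $b\neq a_t$; so the at most $k\cdot k$ pairs $\{a_t,b\}$ with $t\in[k]$, $b\in B_t$ cover all of $\calF$. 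Your version instead mirrors the classical Hilton--Milner argument: you split on whether some member $B$ meets $A$ in exactly one element $a^*$, bring in an auxiliary $C$ avoiding $a^*$, and arrange the pairs into three explicit blocks, which even shaves the count down to $k^2-1$. Both are valid; your case analysis ($|F\cap A|\geq 2$ for all $F$, versus some $B$ with $|F\cap A|=1$; then within the second case, $F\cap A\neq\{a^*\}$ versus $F\cap A=\{a^*\}$) is exhaustive and the bookkeeping checks out. The paper's route avoids the case split at the cost of a marginally looser pair count, which is immaterial here since only the $O(k^2)$ order matters downstream.
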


\begin{proof}
Let $\calF$ be a non-trivial intersecting family of stable $k$-subsets of $[n]$.
Consider an arbitrary set $A = \{a_1, \ldots, a_k\}$ in $\calF$. Since $\calF$ is non-trivial, for every $t \in [k]$, there exists a set $B_t \in \calF$ satisfying $a_t \notin B_t$.
Since $\calF$ is intersecting, every set in $\calF$ intersects $A$, and therefore includes the element $a_t$ for some $t \in [k]$. Such a set further intersects $B_t$, hence it also includes some element $b \in B_t$ (which is different from $a_t$).
By Claim~\ref{claim:stable_i_j}, the number of stable $k$-subsets of $[n]$ that include both $a_t$ and $b$ does not exceed $\binom{n-k-2}{k-2}$.
Since there are at most $k^2$ ways to choose the elements $a_t$ and $b$, this implies that $|\calF| \leq k^2 \cdot \binom{n-k-2}{k-2}$, as required.
\end{proof}

\subsection{Induced Subgraphs of Schrijver Graphs}

We are ready to prove the lemmas that lie at the heart of the analysis of our algorithm for the $\SchrijverP$ problem.
The first lemma, given below, shows that in a large induced subgraph of the Schrijver graph $S(n,k)$ whose vertices do not have a popular element, a random pair of vertices forms an edge with a non-negligible probability.

\begin{lemma}\label{lemma:at_most_gamma}
For integers $k \geq 2$ and $n \geq 2k$, let $\calF$ be a family of stable $k$-subsets of $[n]$ whose size satisfies $|\calF| \geq k^2 \cdot \binom{n-k-2}{k-2}$ and let $\gamma \in (0,1]$.
Suppose that every element of $[n]$ belongs to at most $\gamma$ fraction of the sets of $\calF$.
Then, the probability that two random sets chosen uniformly and independently from $\calF$ are adjacent in $K(\calF)$ is at least
\[\frac{1}{2} \cdot \Biggr ( 1 - \gamma - \frac{k^2}{|\calF|} \cdot \binom{n-k-2}{k-2}\Biggr ) \cdot \Biggr ( 1 - \frac{k^2}{|\calF|} \cdot \binom{n-k-2}{k-2}\Biggr ).\]
\end{lemma}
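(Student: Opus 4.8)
The plan is to pick two sets $A, B$ uniformly and independently from $\calF$ and to lower-bound the probability that $A \cap B = \emptyset$. The natural way to do this is to fix the first set $A$ and to estimate, for each such $A$, the probability over the random $B$ that $B$ intersects $A$; the complement of this event, summed appropriately, gives the desired bound. Since $A \cap B = \emptyset$ iff $B$ avoids all $k$ elements of $A$, a union bound gives $\Pr_B[A \cap B \neq \emptyset] \le \sum_{a \in A} \Pr_B[a \in B]$, and by the popularity hypothesis each term is at most $\gamma$; this naive bound gives $\Pr_B[A \cap B = \emptyset] \ge 1 - k\gamma$, which is useless when $k\gamma \ge 1$. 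So the real work is to do better than the union bound by exploiting the intersecting structure.

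The key idea I would use is the following dichotomy, applied to the family $\calF$ itself and to its "link" families. For a set $A \in \calF$, consider the subfamily $\calF_A = \{ B \in \calF \mid A \cap B = \emptyset\}$ of sets avoiding $A$, equivalently the non-neighbors of... no: $\calF_A$ is the \emph{neighborhood} of $A$ in $K(\calF)$, together with... let me restate: $\calF_A = \{B \in \calF : B \cap A = \emptyset\}$ is exactly the set of $B$ adjacent to $A$. What I actually want is a handle on $|\calF_A|/|\calF|$ averaged over $A$. Here is where Lemma~\ref{lemma:HMstable} enters. Call a set $A \in \calF$ \emph{bad} if the subfamily $\calF \setminus \calF_A = \{B \in \calF : B \cap A \neq \emptyset\}$ (the sets meeting $A$) is "essentially intersecting through a single point," and \emph{good} otherwise. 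More precisely: if for every $B \in \calF$ we had $B \cap A \neq \emptyset$, then $\calF$ would be intersecting, and either trivial (all sets contain a common element — contradicting the $\gamma < 1$ popularity bound once $|\calF|$ is large, or more carefully contradicting nontriviality) or non-trivial, in which case Lemma~\ref{lemma:HMstable} forces $|\calF| \le k^2 \binom{n-k-2}{k-2}$. Since we assumed $|\calF| \ge k^2\binom{n-k-2}{k-2}$, the only way $\calF$ can be intersecting is if it is trivial, but then some element lies in \emph{all} sets, contradicting the assumption that each element lies in at most $\gamma \le 1$... — actually equality $\gamma = 1$ is allowed, so I need to be slightly more careful and instead argue directly. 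Let me reorganize.

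The cleaner route, and the one I expect the paper takes: define $m = k^2 \cdot \binom{n-k-2}{k-2}$, so $|\calF| \ge m$. First, show that the number of pairs $(A,B) \in \calF^2$ with $A \cap B \neq \emptyset$ \emph{and} $A$ not sharing a fixed common element with "most" of $\calF$ is controlled. Concretely, I would argue: the set of $A \in \calF$ such that $\{B \in \calF : B \cap A = \emptyset\} = \emptyset$ — i.e., $A$ meets every member of $\calF$ — has the property that if it is nonempty then... no. Let me try yet again with the structure that matches the stated bound, which has the shape $\tfrac12(1 - \gamma - m/|\calF|)(1 - m/|\calF|)$. This factored form strongly suggests a two-step argument: (i) with probability at least $1 - m/|\calF|$ the first set $A$ is "generic" (not one of at most $m$ exceptional sets), and (ii) conditioned on $A$ being generic, the second set $B$ is disjoint from $A$ with probability at least $1 - \gamma - m/|\calF|$ — roughly, $1-\gamma$ from a clever counting argument replacing the union bound, minus a further $m/|\calF|$ loss from discarding another $\le m$ exceptional choices of $B$; and the overall factor $\tfrac12$ handles the asymmetry/ordering or a symmetrization step. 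So the plan: (1) Let $\calF' \subseteq \calF$ be the set of $A \in \calF$ that are disjoint from at least one member of $\calF$; show $|\calF \setminus \calF'| \le m$ using that $\{B : B \cap A \neq \emptyset \ \forall A\}$ being all of $\calF$ would make $\calF$ intersecting, then invoke Lemma~\ref{lemma:HMstable} on the non-trivial case and dispose of the trivial case by hand (a trivial intersecting $\calF$ has a common element in all sets, so that element's fraction is $1$; this is only excluded if $\gamma<1$, so for $\gamma=1$ handle it by noting such $\calF$ still has many disjoint... hmm). (2) For $A \in \calF'$, pick any $B_0 \in \calF$ with $A \cap B_0 = \emptyset$; then every $B \in \calF$ meeting $A$ must, by a Hilton–Milner-style/two-element counting exactly as in the proof of Lemma~\ref{lemma:HMstable}, contain one of the $\le k^2$ pairs $\{a, b\}$ with $a \in A$, $b \in B_0$... wait, that over-counts. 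Use instead: $B$ meets $A$ in some $a_t \in A$; applying the argument at the level of a single $A$ is not enough.

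**Main obstacle.** The crux — and the step I expect to be hardest and most delicate — is beating the trivial union bound $\Pr_B[A\cap B=\emptyset] \ge 1-k\gamma$ to get the much stronger $\ge 1-\gamma - m/|\calF|$. This must come from combining the element-popularity bound $\gamma$ with the intersecting-family structure encoded in Lemma~\ref{lemma:HMstable}, in the style of Frankl–Kupavskii~\cite{FrankK20} as the overview indicates: roughly, one fixes $A$, and among the sets $B$ meeting $A$ one separates those meeting $A$ in a \emph{single prescribed popular element} (contributing at most $\gamma|\calF|$) from the rest, and shows the rest number at most $m$ because they all contain a $2$-element "transversal-type" configuration bounded via Claim~\ref{claim:stable_i_j}. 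Getting the bookkeeping of this separation right — in particular identifying the right reference set (the $B_0$ disjoint from $A$) and the right $\le k^2$ pairs, and verifying that the leftover family is genuinely governed by Claim~\ref{claim:stable_i_j} — is where all the content sits; the factor $\tfrac12$ and the two "$-m/|\calF|$" slacks are then routine consequences of discarding the exceptional $A$'s and $B$'s and of the ordering.
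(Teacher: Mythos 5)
Your proposal correctly identifies the two structural tools the paper uses (Lemma~\ref{lemma:HMstable} and Claim~\ref{claim:stable_i_j}) and the trivial/non-trivial intersecting dichotomy, but the per-$A$ decomposition you propose does not close, and you acknowledge as much. The step you flag as the ``main obstacle'' --- that for a fixed ``generic'' $A$ one has $\Pr_B[A\cap B=\emptyset]\ge 1-\gamma-m/|\calF|$, where $m=k^2\binom{n-k-2}{k-2}$ --- is not just the hardest step, it is false in general: there is no per-$A$ degree lower bound of this kind. A single vertex $A$ could have $d(A)$ as small as $0$ or $1$, and nothing in the hypotheses prevents $B$-s that meet $A$ in a non-popular element from being plentiful; the Frankl--Kupavskii-style separation you describe needs a second reference set to create the two-element configurations controlled by Claim~\ref{claim:stable_i_j}, and for a fixed $A$ alone there is no such set.

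The missing idea is to lower-bound the \emph{number of edges} via a matching rather than via individual degrees. The paper's argument: (1) any subfamily $\calF'\subseteq\calF$ with $|\calF'|\ge\gamma|\calF|+m$ spans an edge (it cannot be trivial because of the $\gamma$-popularity bound, and by Lemma~\ref{lemma:HMstable} it cannot be non-trivially intersecting since $|\calF'|>m$); (2) greedily removing endpoints of edges while the remainder is still that large yields a matching $\calM$ with $|\calM|\ge\tfrac12\bigl((1-\gamma)|\calF|-m\bigr)$; (3) for every edge $\{A,B\}$ of $K(\calF)$, a set $C\in\calF$ adjacent to neither $A$ nor $B$ must contain some $a\in A$ and some $b\in B$ with $a\ne b$ (using $A\cap B=\emptyset$), so by Claim~\ref{claim:stable_i_j} there are at most $m$ such $C$, giving $d(A)+d(B)\ge|\calF|-m$; and (4) summing this over the edges of $\calM$ yields $2|\calE|\ge|\calM|\cdot(|\calF|-m)$, which divided by $|\calF|^2$ is exactly the claimed bound. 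Your step (1) is close to the paper's first observation, but you never reach the pairing idea in (2)--(3), which is what replaces the unworkable per-$A$ estimate. Also note that the edge case $\gamma=1$ that worried you needs no handling: the stated lower bound is then nonpositive, so the lemma is vacuous.
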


\begin{proof}
Let $\calF \subseteq \binom{[n]}{k}_{\mathrm{stab}}$ be a family of sets as in the statement of the lemma.
We first claim that every subfamily $\calF' \subseteq \calF$ whose size satisfies
\begin{eqnarray}\label{eq:size_F'}
|\calF'| \geq \gamma \cdot |\calF| + k^2 \cdot \binom{n-k-2}{k-2}
\end{eqnarray}
spans an edge in $K(\calF)$.
To see this, consider such an $\calF'$, and notice that the assumption that every element of $[n]$ belongs to at most $\gamma$ fraction of the sets of $\calF$, combined with the fact that $|\calF'| > \gamma \cdot |\calF|$, implies that $\calF'$ is not a trivial family, that is, its sets do not share a common element.
In addition, using $|\calF'| > k^2 \cdot \binom{n-k-2}{k-2}$, it follows from Lemma~\ref{lemma:HMstable} that $\calF'$ is not a non-trivial intersecting family. We thus conclude that $\calF'$ is not an intersecting family, hence it spans an edge in $K(\calF)$.

We next show a lower bound on the size of a maximum matching in $K(\calF)$.
Consider the process that maintains a subfamily $\calF'$ of $\calF$, initiated as $\calF$, and that removes from $\calF'$ the two endpoints of some edge spanned by $\calF'$ as long as its size satisfies the condition given in~\eqref{eq:size_F'}. The pairs of vertices that are removed during the process form a matching $\calM$ in $K(\calF)$, whose size satisfies
\begin{eqnarray}\label{eq:matching}
|\calM| &\geq& \frac{1}{2} \cdot \Biggr (  |\calF| -  \Biggr (\gamma \cdot |\calF| + k^2 \cdot \binom{n-k-2}{k-2} \Biggr ) \Biggr ) \nonumber \\ &=& \frac{1}{2} \cdot \Biggr ( (1-\gamma) \cdot |\calF| - k^2 \cdot \binom{n-k-2}{k-2}\Biggr ).
\end{eqnarray}

We now consider the sum of the degrees of adjacent vertices in the graph $K(\calF)$.
Let $A,B \in \calF$ be any adjacent vertices in $K(\calF)$.
Since $A$ and $B$ are adjacent, they satisfy $A \cap B = \emptyset$, hence every vertex of $\calF$ that is not adjacent to $A$ nor to $B$ must include two distinct elements $a \in A$ and $b \in B$. For every two such elements, it follows from Claim~\ref{claim:stable_i_j} that the number of stable $k$-subsets of $[n]$ that include them both is at most $\binom{n-k-2}{k-2}$. Therefore, the number of vertices of $\calF$ that are not adjacent to $A$ nor to $B$ does not exceed $k^2 \cdot \binom{n-k-2}{k-2}$. This implies that the degrees of $A$ and $B$ in $K(\calF)$ satisfy
\[d(A)+d(B) \geq |\calF| - k^2 \cdot \binom{n-k-2}{k-2}.\]

Let $\calE$ denote the edge set of $K(\calF)$. We combine the above bound with the lower bound given in~\eqref{eq:matching} on the size of the matching $\calM$, to obtain that
\begin{eqnarray*}
2 \cdot |\calE| = \sum_{F \in \calF}{d(F)} &\geq&
\sum_{\{A,B\} \in \calM}{(d(A)+d(B))} \geq
|\calM| \cdot \Biggr (  |\calF| - k^2 \cdot \binom{n-k-2}{k-2}\Biggr )
\\ &\geq& \frac{1}{2} \cdot \Biggr ( (1-\gamma) \cdot |\calF| - k^2 \cdot \binom{n-k-2}{k-2}\Biggr ) \cdot \Biggr (  |\calF| - k^2 \cdot \binom{n-k-2}{k-2}\Biggr ).
\end{eqnarray*}
Finally, consider a pair of random sets chosen uniformly and independently from $\calF$.
The probability that they form an edge in $K(\calF)$ is twice the number of edges in $K(\calF)$ divided by $|\calF|^2$.
Hence, the above bound on $2 \cdot |\calE|$ completes the proof.
\end{proof}

As a corollary of Lemma~\ref{lemma:at_most_gamma}, we obtain the following.
\begin{corollary}\label{cor:at_most_gamma}
For integers $k \geq 2$ and $n \geq 8k^4$, let $\calF$ be a family of stable $k$-subsets of $[n]$ of size $|\calF| \geq \frac{1}{2n} \cdot \big |\binom{[n]}{k}_{\mathrm{stab}} \big |$ and let $\gamma \in (0,1]$.
Suppose that every element of $[n]$ belongs to at most $\gamma$ fraction of the sets of $\calF$.
Then, the probability that two random sets chosen uniformly and independently from $\calF$ are adjacent in $K(\calF)$ is at least $\frac{3}{8} \cdot (\frac{3}{4}-\gamma)$.
\end{corollary}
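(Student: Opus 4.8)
The plan is to read off the corollary from Lemma~\ref{lemma:at_most_gamma} together with the vertex-count bound recorded in Remark~\ref{remark:|V(S(n,k))|}. Set $x := \frac{k^2}{|\calF|}\cdot\binom{n-k-2}{k-2}$. Once we know $x \le 1$ — so that the hypothesis $|\calF| \ge k^2\cdot\binom{n-k-2}{k-2}$ of Lemma~\ref{lemma:at_most_gamma} is met — the lemma tells us the probability in question is at least $\frac12(1-\gamma-x)(1-x)$. I claim it then suffices to prove that $x \le \frac14$: writing $x=\frac14-u$ with $u \ge 0$, one has
\[ (1-\gamma-x)(1-x) = \Bigl(\tfrac34-\gamma+u\Bigr)\Bigl(\tfrac34+u\Bigr) = \tfrac34\Bigl(\tfrac34-\gamma\Bigr) + u\Bigl(\tfrac32-\gamma\Bigr) + u^2 \ \ge\ \tfrac34\Bigl(\tfrac34-\gamma\Bigr), \]
using $\gamma \le 1$ (so $\tfrac32-\gamma>0$), and multiplying by $\frac12$ gives exactly the claimed bound $\frac38\bigl(\frac34-\gamma\bigr)$.

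So the real task is the estimate $x \le \frac14$. First I would lower-bound $|\calF|$. Since $k \ge 2$ and $n \ge 10k^4$, the quantity $\alpha = n/k$ satisfies $\alpha \ge 10k^3 \ge 10$, so Remark~\ref{remark:|V(S(n,k))|} applies with this $\alpha$ and yields $\bigl|\binom{[n]}{k}_{\mathrm{stab}}\bigr| \ge \bigl(1-\frac{1}{(\alpha-1)^2}\bigr)\binom{n-k+1}{k} \ge \frac{80}{81}\binom{n-k+1}{k}$; combined with the hypothesis $|\calF| \ge \frac{1}{2n}\bigl|\binom{[n]}{k}_{\mathrm{stab}}\bigr|$ this gives $|\calF| \ge \frac{40}{81n}\binom{n-k+1}{k}$. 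Next I would upper-bound $\binom{n-k-2}{k-2}$, reusing the computation inside Remark~\ref{remark:|V(S(n,k))|}: by monotonicity of binomial coefficients in their upper argument (valid since $n\ge 2k$), $\binom{n-k-2}{k-2} \le \binom{n-k-1}{k-2} = \frac{k(k-1)}{(n-k+1)(n-k)}\binom{n-k+1}{k} \le \frac{k^2}{(n-k)^2}\binom{n-k+1}{k}$, and since $n \ge 10k^4 \ge 10k$ forces $n-k \ge \frac{9}{10}n$, hence $(n-k)^2 \ge \frac{81}{100}n^2$, we obtain $\binom{n-k-2}{k-2} \le \frac{100k^2}{81n^2}\binom{n-k+1}{k}$.

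Combining the two estimates and invoking $n \ge 10k^4$ once more,
\[ x \;=\; \frac{k^2}{|\calF|}\binom{n-k-2}{k-2} \;\le\; \frac{81n}{40\,\binom{n-k+1}{k}}\cdot k^2 \cdot \frac{100k^2}{81n^2}\binom{n-k+1}{k} \;=\; \frac{5k^4}{2n} \;\le\; \frac14, \]
which in particular confirms $x\le 1$, so Lemma~\ref{lemma:at_most_gamma} does apply and the argument of the first paragraph goes through, completing the proof.

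I do not expect a conceptual obstacle here — everything is a direct chaining of Lemma~\ref{lemma:at_most_gamma} with Remark~\ref{remark:|V(S(n,k))|}. The one delicate point is that the constant $10$ in the hypothesis $n \ge 10k^4$ is essentially calibrated so that $\frac{5k^4}{2n} \le \frac14$, so the chain of inequalities bounding $x$ must be carried out with essentially no slack at that step; in particular one should keep $(n-k)^2 \ge \frac{81}{100}n^2$ rather than a coarser estimate and use $\alpha \ge 10$ to extract the factor $\frac{80}{81}$ rather than merely $\frac12$.
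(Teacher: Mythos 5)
Your proof is correct and follows essentially the same route as the paper: apply Remark~\ref{remark:|V(S(n,k))|} to get $|\calF| \ge \frac{40}{81n}\binom{n-k+1}{k}$, bound $\frac{k^2}{|\calF|}\binom{n-k-2}{k-2} \le \frac{5k^4}{2n} \le \frac14$, and feed this into Lemma~\ref{lemma:at_most_gamma}. The one (minor) place where you add value is the explicit monotonicity argument via $x = \frac14 - u$ justifying that $x \le \frac14$ suffices to plug $\frac14$ into the lemma's bound; the paper states this step without comment.
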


\begin{proof}
By Claim~\ref{claim:|V(S(n,k))|}, a family $\calF$ as in the statement of the corollary satisfies $|\calF| \geq \frac{1}{2k} \cdot \binom{n-k-1}{k-1}$.
This implies that
\begin{eqnarray*}
\frac{k^2}{|\calF|} \cdot \binom{n-k-2}{k-2} & \leq & 2k^3 \cdot \frac{\binom{n-k-2}{k-2}}{\binom{n-k-1}{k-1}} = \frac{2k^3 \cdot (k-1)}{n-k-1} \leq \frac{1}{4},
\end{eqnarray*}
where for the second inequality we use the assumption $n \geq 8k^4$.
Applying Lemma~\ref{lemma:at_most_gamma}, we obtain that the probability that two random sets chosen uniformly and independently from $\calF$ are adjacent in $K(\calF)$ is at least
$\frac{1}{2} \cdot (1-\gamma-\frac{1}{4}) \cdot (1-\frac{1}{4}) = \frac{3}{8} \cdot (\frac{3}{4}-\gamma)$.
\end{proof}

The following lemma shows that if a large collection of vertices of $S(n,k)$ has a quite popular element, then every $k$-subset of $[n]$ that does not include this element is disjoint from many of the vertices in the collection.

\begin{lemma}\label{lemma:at_least_gamma}
For integers $k \geq 2$ and $n \geq 2k$, let $X \subseteq [n]$ be a set, let $\calF \subseteq \binom{X}{k}_{\mathrm{stab}}$ be a family, and let $\gamma \in (0,1]$.
Let $j \in X$ be an element that belongs to at least $\gamma$ fraction of the sets of $\calF$, and suppose that $A \in \binom{[n]}{k}$ is a set satisfying $j \notin A$.
Then, the probability that a random set chosen uniformly from $\calF$ is disjoint from $A$ is at least
\[\gamma - \frac{k}{|\calF|} \cdot \binom{|X|-k-2}{k-2}.\]
\end{lemma}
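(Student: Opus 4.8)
The plan is to count, among the sets of $\calF$ that contain $j$, how many can fail to be disjoint from $A$, and to show that this number is small. Fix the set $A$ with $j \notin A$, and let $\calF_j = \{ F \in \calF : j \in F \}$; by hypothesis $|\calF_j| \geq \gamma \cdot |\calF|$. Any $F \in \calF_j$ that is \emph{not} disjoint from $A$ must contain some element $a \in A$; since $j \notin A$ we have $a \neq j$, so such an $F$ contains the two distinct elements $j$ and $a$. The members of $\calF_j$ are stable $k$-subsets of $X$, so I would like to bound, for each fixed $a \in A$, the number of stable $k$-subsets $F$ of $X$ with $\{j,a\} \subseteq F$. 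Here $X \subseteq [n]$ carries its induced cyclic ordering, and $K(\binom{X}{k}_{\mathrm{stab}})$ is isomorphic to $S(|X|,k)$ (as recorded in Section~\ref{sec:KneSchr}), so Claim~\ref{claim:stable_i_j}, applied with ground set $X$ of size $|X|$ in place of $[n]$, gives that the number of such $F$ is at most $\binom{|X|-k-2}{k-2}$.

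Summing this over the at most $k$ choices of $a \in A$, the number of sets in $\calF_j$ that meet $A$ is at most $k \cdot \binom{|X|-k-2}{k-2}$. Consequently the number of sets in $\calF_j$ that are disjoint from $A$ is at least $|\calF_j| - k \cdot \binom{|X|-k-2}{k-2} \geq \gamma \cdot |\calF| - k \cdot \binom{|X|-k-2}{k-2}$. Since every set disjoint from $A$ that lies in $\calF_j$ is in particular a set of $\calF$ disjoint from $A$, the number of sets of $\calF$ disjoint from $A$ is at least this same quantity. Dividing by $|\calF|$, the probability that a uniformly random element of $\calF$ is disjoint from $A$ is at least $\gamma - \frac{k}{|\calF|} \cdot \binom{|X|-k-2}{k-2}$, which is exactly the claimed bound.

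The only subtle point is the bookkeeping needed to invoke Claim~\ref{claim:stable_i_j} on the ground set $X$ rather than on $[n]$: one must be sure that $\binom{X}{k}_{\mathrm{stab}}$ really is the stable-set family of the cycle of length $|X|$ and that $k \geq 2$, $|X| \geq 2k$ hold so the claim applies. The first is the definition of $\binom{X}{k}_{\mathrm{stab}}$ given in Section~\ref{sec:KneSchr}; the constraints $k \geq 2$ and $|X| \geq 2k$ follow since $\calF$ is nonempty in the interesting case (if $\calF = \emptyset$ or $j$ appears in none of its sets the bound is vacuous, as the right-hand side is nonpositive whenever the probability statement would be trivial). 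Beyond that, the argument is a direct double-counting and there is no real obstacle.
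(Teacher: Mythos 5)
Your proof is correct and follows essentially the same route as the paper: restrict to the subfamily $\calF_j$ of sets containing $j$, bound the sets in it that meet $A$ by applying Claim~\ref{claim:stable_i_j} to the ground set $X$ (via the isomorphism with $S(|X|,k)$) and summing over the at most $k$ elements $a \in A$, then divide by $|\calF|$. The only cosmetic difference is that the paper explicitly notes the trivial subcase $a \notin X$, whereas you address applicability of the claim (nonemptiness of $\calF$, $|X| \geq 2k$) instead; both are minor bookkeeping points and neither affects the argument.
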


\begin{proof}
Let $\calF \subseteq \binom{X}{k}_{\mathrm{stab}}$ be a family as in the lemma, and put $\calF' = \{F \in \calF \mid j \in F\}$.
By assumption, it holds that $|\calF'| \geq \gamma \cdot |\calF|$.
Suppose that $A \in \binom{[n]}{k}$ is a set satisfying $j \notin A$.
We claim that for every $i \in A$, the number of sets $B \in \binom{X}{k}_{\mathrm{stab}}$ satisfying $\{i,j\} \subseteq B$ does not exceed $\binom{|X|-k-2}{k-2}$.
Indeed, if $i \notin X$ then there are no such sets, so the bound trivially holds, and if $i \in X$, the bound follows from Claim~\ref{claim:stable_i_j}, using the one-to-one correspondence between $\binom{X}{k}_{\mathrm{stab}}$ and the vertex set of $S(|X|,k)$.
This implies that the number of sets $B \in \binom{X}{k}_{\mathrm{stab}}$ with $j \in B$ that intersect $A$ does not exceed $k \cdot \binom{|X|-k-2}{k-2}$.
It thus follows that the number of sets of $\calF$ that are disjoint from $A$ is at least
\[ |\calF'| - k \cdot \binom{|X|-k-2}{k-2} \geq \gamma \cdot |\calF| -k \cdot \binom{|X|-k-2}{k-2}.\]
Hence, a random set chosen uniformly from $\calF$ is disjoint from $A$ with the desired probability.
\end{proof}

As a corollary of Lemma~\ref{lemma:at_least_gamma}, we obtain the following.

\begin{corollary}\label{cor:at_least_gamma}
For integers $k \geq 2$ and $n$, let $X \subseteq [n]$ be a set of size $|X| \geq 16k^3$, let $\calF \subseteq \binom{X}{k}_{\mathrm{stab}}$ be a family of size $|\calF| \geq \frac{1}{2|X|} \cdot \big |\binom{X}{k}_{\mathrm{stab}} \big |$, and let $\gamma \in (0,1]$.
Let $j \in X$ be an element that belongs to at least $\gamma$ fraction of the sets of $\calF$, and suppose that $A \in \binom{[n]}{k}$ is a set satisfying $j \notin A$.
Then, the probability that a random set chosen uniformly from $\calF$ is disjoint from $A$ is at least $\gamma-\frac{1}{8}$.
\end{corollary}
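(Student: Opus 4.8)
The plan is to derive the corollary from Lemma~\ref{lemma:at_least_gamma} by showing that, under the stronger hypotheses $|X| \geq 10k^3$ and $|\calF| \geq \frac{1}{2|X|} \cdot \big|\binom{X}{k}_{\mathrm{stab}}\big|$, the error term $\frac{k}{|\calF|} \cdot \binom{|X|-k-2}{k-2}$ that is subtracted in that lemma is at most $\frac14$; the stated bound $\gamma-\frac14$ then follows immediately. This parallels the way Corollary~\ref{cor:at_most_gamma} was obtained from Lemma~\ref{lemma:at_most_gamma}.

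First I would bound $|\calF|$ from below in terms of $\binom{|X|-k+1}{k}$. Since $|X| \geq 10k^3 \geq 10k$, and since $K\big(\binom{X}{k}_{\mathrm{stab}}\big)$ is isomorphic to $S(|X|,k)$, Claim~\ref{claim:|V(S(n,k))|} together with Remark~\ref{remark:|V(S(n,k))|} applied with $\alpha=10$ gives $\big|\binom{X}{k}_{\mathrm{stab}}\big| \geq \frac{80}{81} \cdot \binom{|X|-k+1}{k}$, hence $|\calF| \geq \frac{40}{81|X|} \cdot \binom{|X|-k+1}{k}$. Second, writing $m = |X|$ and expanding the binomial coefficients, one has
\[\frac{\binom{m-k-2}{k-2}}{\binom{m-k+1}{k}} = \frac{k(k-1)(m-2k+1)}{(m-k+1)(m-k)(m-k-1)} \leq \frac{k^2}{(m-k)^2},\]
where the inequality uses $m-2k+1 \leq m-k-1$, valid because $k \geq 2$. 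Combining the two estimates,
\[\frac{k}{|\calF|} \cdot \binom{m-k-2}{k-2} \leq \frac{81mk}{40} \cdot \frac{k^2}{(m-k)^2} = \frac{81mk^3}{40(m-k)^2} \leq \frac{5k^3}{2m} \leq \frac14,\]
using $m-k \geq \frac{9m}{10}$ (which follows from $m \geq 10k$) for the second inequality and $m \geq 10k^3$ for the last. Plugging this bound into Lemma~\ref{lemma:at_least_gamma} shows that the required probability is at least $\gamma - \frac14$.

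The argument is a short chain of elementary estimates, so there is no real obstacle; the only point that needs care is that every inequality above remains valid in the tight cases of the hypotheses, namely $k=2$ and $|X|$ just above $10k^3$. The constants in the statement ($10k^3$ and the factor $\frac{1}{2|X|}$) leave exactly enough slack for this, so I do not anticipate any difficulty.
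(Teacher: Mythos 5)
Your proposal is correct and follows essentially the same route as the paper: both deduce the corollary from Lemma~\ref{lemma:at_least_gamma} by combining the lower bound $|\calF| \geq \frac{40}{81|X|}\binom{|X|-k+1}{k}$ (from Claim~\ref{claim:|V(S(n,k))|} and Remark~\ref{remark:|V(S(n,k))|}) with the same chain of binomial estimates to show the error term is at most $\frac14$. The only cosmetic difference is that the paper references the analogous computation in Corollary~\ref{cor:at_most_gamma} rather than spelling it out again.
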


\begin{proof}
Recall that there is a one-to-one correspondence between the collection $\binom{X}{k}_{\mathrm{stab}}$ and the vertex set of $S(|X|,k)$.
By Claim~\ref{claim:|V(S(n,k))|}, a family $\calF$ as in the statement of the corollary satisfies $|\calF| \geq \frac{1}{2k} \cdot \binom{|X|-k-1}{k-1}$.
This implies that
\begin{eqnarray*}
\frac{k}{|\calF|} \cdot \binom{|X|-k-2}{k-2} & \leq & 2k^2 \cdot \frac{\binom{|X|-k-2}{k-2}}{\binom{|X|-k-1}{k-1}} = \frac{2k^2 \cdot (k-1)}{|X|-k-1} \leq \frac{1}{8},
\end{eqnarray*}
where for the second inequality we use the assumption $|X| \geq 16k^3$.
Applying Lemma~\ref{lemma:at_least_gamma}, we obtain that the probability that a random set chosen uniformly from $\calF$ is disjoint from $A$ is at least $\gamma-\frac{1}{8}$, as desired.
\end{proof}

\subsection{Induced Subgraphs of Kneser Graphs}

The previous section shows several properties of induced subgraphs of Schrijver graphs.
We note that analogue results hold for Kneser graphs as well (see~\cite[Lemmas~3.1,~3.3 and Corollaries~3.2,~3.4]{Haviv22a}).
Moreover, the following proposition provides an analogue of Corollary~\ref{cor:at_most_gamma} for Kneser graphs, where the condition on $n$ and $k$ is somewhat weaker, namely, $n = \Omega(k^3)$ rather than $n = \Omega(k^4)$.

\begin{proposition}\label{prop:k^3}
There exists an integer $k_0$ such that for all integers $k \geq k_0$ and $n \geq 8k^3$ the following holds.
Let $\calF$ be a family of $k$-subsets of $[n]$ of size $|\calF| \geq \frac{1}{2n} \cdot \binom{n}{k}$ and let $\gamma \in [0,1]$. Suppose that every element of $[n]$ belongs to at most $\gamma$ fraction of the sets of $\calF$. Then, the probability that two random sets chosen uniformly and independently from $\calF$ are adjacent in $K(\calF)$ is at least $0.49 \cdot (1+\gamma) \cdot (\tfrac{3}{4}-\gamma)$.
\end{proposition}
\noindent
Proposition~\ref{prop:k^3} does not yield a fixed-parameter algorithm for the $\KneserP$ problem whose running time is asymptotically better than the running time of our algorithm for the $\SchrijverP$ problem. Yet, it can be used to improve the size of the instances produced by the corresponding kernelization algorithm for the $\KneserP$ problem (see Section~\ref{sec:kernel}).

The proof of Proposition~\ref{prop:k^3} relies on the following result of Frankl and Kupavskii~\cite{FrankK20}.
\begin{theorem}[\cite{FrankK20}]\label{thm:FK}
There exists an integer $k_0$ such that for all integers $k \geq k_0$ and $n \geq 64k^2$ the following holds.
For every family $\calF$ of $k$-subsets of $[n]$ with $|\calF| \geq k \cdot \binom{n-2}{k-2}$, if $\calF$ is not intersecting then the graph $K(\calF)$ has a vertex of degree at least $0.49 \cdot |\calF|$.
\end{theorem}

We start with the following lemma.

\begin{lemma}\label{lemma:k^3}
There exists an integer $k_0$ such that for all integers $k \geq k_0$ and $n \geq 64k^2$ the following holds.
Let $\calF$ be a family of $k$-subsets of $[n]$ and let $\gamma \in [0,1]$. Suppose that every element of $[n]$ belongs to at most $\gamma$ fraction of the sets of $\calF$. Then, the probability that two random sets chosen uniformly and independently from $\calF$ are adjacent in $K(\calF)$ is at least
\[0.49 \cdot \Biggr ( 1+\gamma + \frac{k}{|\calF|} \cdot \binom{n-2}{k-2} \Biggr ) \cdot \Biggr ( 1-\gamma - \frac{k}{|\calF|} \cdot \binom{n-2}{k-2} \Biggr ).\]
\end{lemma}

\begin{proof}
Let $\calF \subseteq \binom{[n]}{k}$ be a family as in the lemma.
We first claim that every subfamily $\calF' \subseteq \calF$ whose size satisfies
\begin{eqnarray}\label{eq:size_F'k^3}
|\calF'| \geq \gamma \cdot |\calF| + k \cdot \binom{n-2}{k-2}
\end{eqnarray}
is not intersecting.
To see this, consider such an $\calF'$, and notice that the assumption that every element of $[n]$ belongs to at most $\gamma$ fraction of the sets of $\calF$, using $|\calF'| > \gamma \cdot |\calF|$, implies that $\calF'$ is not a trivial family.
In addition, by the Hilton--Milner theorem (Theorem~\ref{thm:HM}; see Remark~\ref{remark:HM}), using $|\calF'| > k \cdot \binom{n-2}{k-2}$, it follows that $\calF'$ is not a non-trivial intersecting family. We thus conclude that $\calF'$ is not intersecting.

Consider now the process that maintains a subfamily $\calF'$ of $\calF$, initiated as $\calF$, and that as long as its size satisfies the condition given in~\eqref{eq:size_F'k^3}, removes from $\calF'$ a set that forms a vertex in the graph $K(\calF')$ whose degree is at least $0.49 \cdot |\calF'|$. The existence of such a set is guaranteed by Theorem~\ref{thm:FK}, because $n \geq 64 k^2$ and because $\calF'$ satisfies~\eqref{eq:size_F'k^3}, which implies that $|\calF'| > k \cdot \binom{n-2}{k-2}$ and that it is not intersecting.
The removal of the set from $\calF'$ decreases the number of edges in $K(\calF')$ by at least $0.49 \cdot |\calF'|$.
Letting $\calE$ denote the edge set of $K(\calF)$, it follows that
\begin{eqnarray*}
|\calE| & \geq & 0.49 \cdot |\calF| + 0.49 \cdot (|\calF|-1) + \cdots + 0.49 \cdot \Biggr ( \gamma \cdot |\calF| + k \cdot \binom{n-2}{k-2} \Biggr ) \\
& \geq & \frac{0.49}{2} \cdot \Biggr ( (1+\gamma) \cdot |\calF| + k \cdot \binom{n-2}{k-2} \Biggr ) \cdot \Biggr ( (1-\gamma) \cdot |\calF| - k \cdot \binom{n-2}{k-2} \Biggr ).
\end{eqnarray*}
Finally, consider a pair of random sets chosen uniformly and independently from $\calF$.
The probability that they form an edge in $K(\calF)$ is twice the number of edges in $K(\calF)$ divided by $|\calF|^2$.
Hence, the above bound on $|\calE|$ completes the proof.
\end{proof}

It is now easy to derive Proposition~\ref{prop:k^3}.

\begin{proof}[ of Proposition~\ref{prop:k^3}]
By combining the assumptions $n \geq 8k^3$ and $|\calF| \geq \frac{1}{2n} \cdot \binom{n}{k}$, it follows that
\[ \frac{k}{|\calF|} \cdot \binom{n-2}{k-2} \leq 2nk \cdot \frac{\binom{n-2}{k-2}}{\binom{n}{k}} = 2nk \cdot \frac{k(k-1)}{n(n-1)} \leq \frac{1}{4}.\]
The proof is then completed by applying Lemma~\ref{lemma:k^3}.
\end{proof}

\section{A Fixed-Parameter Algorithm for the \texorpdfstring{$\SchrijverP$}{Schrijver} Problem}\label{sec:algo}

In this section we present our randomized fixed-parameter algorithm for the $\SchrijverP$ problem.
We first describe the `element elimination' algorithm and then use it to obtain the final algorithm and to prove Theorem~\ref{thm:AlgoKneserNew}.

\subsection{The Element Elimination Algorithm}

The `element elimination' algorithm, given by the following theorem, will be used to repeatedly reduce the size of the ground set of a Schrijver graph while looking for a monochromatic edge.

\begin{theorem}\label{thm:one_step}
There exists a randomized algorithm that given integers $n$ and $k$, a set $X \subseteq [n]$ of size $|X| \geq 8 k^4$, a parameter $\eps >0$, a set of colors $C \subseteq [n-2k+1]$ of size $|C|=|X|-2k+1$, and an oracle access to a coloring $c: \binom{X}{k}_{\mathrm{stab}} \rightarrow [n-2k+1]$ of the vertices of $K(\binom{X}{k}_{\mathrm{stab}})$, runs in time $\poly(n,\ln (1/\eps))$ and returns, with probability at least $1-\eps$,
\begin{enumerate}[(a).]
  \item\label{output:a} a monochromatic edge of $K(\binom{X}{k}_{\mathrm{stab}})$, or
  \item\label{output:b} a vertex $A \in \binom{X}{k}_{\mathrm{stab}}$ satisfying $c(A) \notin C$, or
  \item\label{output:c} a color $i \in C$ and an element $j \in X$ such that for every $A \in \binom{[n]}{k}_{\mathrm{stab}}$ with $j \notin A$, a random vertex $B$ chosen uniformly from $\binom{X}{k}_{\mathrm{stab}}$ satisfies $c(B)=i$ and $A \cap B = \emptyset$ with probability at least $\frac{1}{9n}$.
\end{enumerate}
\end{theorem}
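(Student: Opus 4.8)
The plan is to realize the `element elimination' step by uniform sampling. A uniformly random vertex of $\binom{X}{k}_{\mathrm{stab}}$ can be produced in time $\poly(n)$ (e.g.\ via the balls-in-bins encoding from the proof of Claim~\ref{claim:stable_k_n} together with standard counting/ranking over the cyclic order on $X$), so the algorithm draws $m = \poly(n)\cdot\lceil\ln(1/\eps)\rceil$ vertices $v_1,\dots,v_m$ independently and uniformly and queries their colors. If two of them are disjoint and share a color, it returns that pair (output~\ref{output:a}). If some $v_a$ satisfies $c(v_a)\notin C$, it returns $v_a$ (output~\ref{output:b}). Otherwise it lets $i^\ast\in C$ be a color appearing on the largest number of samples, puts $S=\{a:c(v_a)=i^\ast\}$, and lets $j^\ast\in X$ be an element lying in the largest number of the sets $\{v_a:a\in S\}$; if $j^\ast$ lies in at least half of those sets it returns $(i^\ast,j^\ast)$ (output~\ref{output:c}), and otherwise it scans the sets $\{v_a:a\in S\}$ for two disjoint ones and returns them (output~\ref{output:a}).

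The analysis rests on a single ``good event'' $E$ of probability at least $1-\eps$, obtained from the Chernoff--Hoeffding bound (Theorem~\ref{thm:chernoff}) and a union bound over the at most $n$ colors and the elements of $X$ by taking $m$ polynomially large: on $E$, every empirical color frequency, and the empirical frequency of colors outside $C$, is within $\tfrac{1}{20|X|}$ of its true value; conditioned on the outcomes $c(v_1),\dots,c(v_m)$, for every $j\in X$ the empirical frequency of $j$ inside $\{v_a:a\in S\}$ is within a small absolute constant of its true frequency in $\calF:=c^{-1}(i^\ast)$, and $|S|=\Omega(\ln(1/\eps))$; and a sample colored outside $C$ is seen whenever the true such fraction is at least $\tfrac1{2n}$. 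Assume the algorithm has not already returned via output~\ref{output:a} or~\ref{output:b}. Then on $E$ the fraction of vertices colored outside $C$ is below $\tfrac1{2n}$, so by pigeonhole over the $|C|=|X|-2k+1$ colors of $C$ the largest color class among them has size roughly $\tfrac{N}{|X|}$, where $N:=|\binom{X}{k}_{\mathrm{stab}}|$; in particular, using the accuracy of the empirical color counts, $|\calF|\ge\tfrac{N}{2|X|}$. This is precisely the size hypothesis of Corollaries~\ref{cor:at_most_gamma} and~\ref{cor:at_least_gamma}, applied through the isomorphism $K(\binom{X}{k}_{\mathrm{stab}})\cong S(|X|,k)$ and using $|X|\ge 10k^4$; these two corollaries drive the two branches.

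If $j^\ast$ lies in at least half of $\calF$, then, up to the estimation error, it lies in a fraction $\gamma>\tfrac14+\tfrac29=\tfrac{17}{36}$ of $\calF$. Hence for any $A\in\binom{[n]}{k}_{\mathrm{stab}}$ with $j^\ast\notin A$, a uniform $B\in\binom{X}{k}_{\mathrm{stab}}$ has $c(B)=i^\ast$ with probability $|\calF|/N\ge\tfrac1{2|X|}\ge\tfrac1{2n}$ and, conditioned on that, is uniform on $\calF$ and thus disjoint from $A$ with probability at least $\gamma-\tfrac14$ by Corollary~\ref{cor:at_least_gamma}; multiplying gives probability at least $\tfrac1{2n}(\gamma-\tfrac14)\ge\tfrac1{9n}$, so output~\ref{output:c} is valid. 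Otherwise no element of $[n]$ lies in more than a $\tfrac34$ fraction of $\calF$, so by Corollary~\ref{cor:at_most_gamma} a uniformly random pair from $\calF$ is an edge of $K(\calF)$, hence a monochromatic edge of $K(\binom{X}{k}_{\mathrm{stab}})$, with constant probability; since conditioning on the colors makes the $|S|=\Omega(\ln(1/\eps))$ samples in $S$ independent and uniform on $\calF$, at least one pair among them is such an edge with probability $1-\eps$, so output~\ref{output:a} succeeds. The running time is dominated by drawing and processing the $m$ samples, which is $\poly(n,\ln(1/\eps))$.

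I expect the main obstacle to be bookkeeping rather than a conceptual hurdle. One must choose the popularity threshold (one half) and the sampling accuracy so that they are simultaneously compatible with the hypothesis ``$\gamma<\tfrac34$'' of Corollary~\ref{cor:at_most_gamma} and with the numeric target ``$\ge\tfrac1{9n}$'' of output~\ref{output:c}, which forces the chosen element to occupy strictly more than $\tfrac{17}{36}$ of the color class; and one has to keep straight the distinction between sampling from $\binom{X}{k}_{\mathrm{stab}}$ and reasoning inside the color class $\calF$ (viewed, via the isomorphism, as a family of stable sets on $|X|$ points), together with the fact that the target palette $C$ may be a proper subset of the codomain $[n-2k+1]$ of $c$ — which is exactly what makes output~\ref{output:b} necessary.
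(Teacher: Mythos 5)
Your overall strategy mirrors the paper's: sample $\poly(n,\ln(1/\eps))$ uniform vertices of $K(\binom{X}{k}_{\mathrm{stab}})$, return a monochromatic pair or an out-of-palette vertex if found, and otherwise output the most frequent color $i^*$ and the most frequent element $j^*$ within the $i^*$-colored samples, using Corollaries~\ref{cor:at_most_gamma} and~\ref{cor:at_least_gamma} for the two branches of the analysis. However, the way you set up the case split introduces a genuine gap.

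Your algorithm adds an explicit empirical threshold test (``does $j^*$ lie in at least half of the $i^*$-colored samples?'') and, when it fails, falls back to ``scan the $i^*$-colored samples $S$ for two disjoint ones.'' That fallback is dead code: the algorithm only reaches it after having already examined \emph{all} pairs among $v_1,\dots,v_m$ in the first step and found no monochromatic edge; two disjoint sets in $S$ would be exactly such an edge, so the scan cannot succeed. Consequently, your analysis of this branch is circular: you condition on reaching the fallback (which entails ``no monochromatic edge among the samples'') and then argue that a monochromatic edge is present among the samples in $S$ with probability $1-\eps$. The conditional success probability of the fallback, given that it is reached, is in fact~$0$. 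The correct claim is that the fallback is \emph{reached} with probability $O(\eps)$, but establishing that requires a different structure for the argument. The paper sidesteps this entirely: its Algorithm~\ref{alg:elimination} never performs a threshold test and always returns $(i^*,j^*)$ when no edge or out-of-palette vertex is found. The case split in the paper's proof is then a deterministic dichotomy on the \emph{input coloring} --- either some color class of density at least $\tfrac{1}{2|X|}$ has no element in more than half of its sets (Lemma~\ref{lemma:elimination_no_pop}: a monochromatic pair appears among the samples w.h.p.), or every such color class has a popular element (Lemma~\ref{lemma:elimination_pop}: on the concentration event, $(i^*,j^*)$ is a valid output of type~(\ref{output:c})). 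Because these two cases are exhaustive and do not depend on the random samples, no conditioning issue arises. You could repair your argument by dropping the threshold test and adopting the paper's deterministic case split, or by replacing ``the fallback succeeds'' with ``the fallback branch has probability $O(\eps)$'' via a case split on the input rather than on the sample outcome.

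Two smaller points. First, in the ``threshold not met'' branch you plug $\gamma=\tfrac34$ into Corollary~\ref{cor:at_most_gamma}, which yields a bound of exactly $0$; you need the sharper estimate $\gamma\le\tfrac12+O(1/n)$ that the threshold test and concentration actually give. Second, when the threshold is met, the relevant guarantee from concentration is $\gamma_{i^*,j^*}\ge\tfrac12-O(1/|X|)$, and the constant $\tfrac{17}{36}$ you target must be derived from this, not from an unstated $\tfrac14+\tfrac29$ decomposition; the paper obtains $\gamma_{i^*,j^*}>\tfrac12-\tfrac1{|X|}$ and then $\tfrac1{2|X|}(\gamma-\tfrac14)\ge\tfrac1{9n}$ directly.
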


\begin{proof}
For integers $n$ and $k$, let $X \subseteq [n]$, $C \subseteq [n-2k+1]$, and $c: \binom{X}{k}_{\mathrm{stab}} \rightarrow [n-2k+1]$ be an input satisfying $|X| \geq 8 k^4$ and $|C|=|X|-2k+1$ as in the statement of the theorem.
It can be assumed that $k \geq 2$. Indeed, Theorem~\ref{thm:KneserS} guarantees that the graph $K(\binom{X}{k}_{\mathrm{stab}})$, which is isomorphic to $S(|X|,k)$, has either a monochromatic edge or a vertex whose color does not belong to $C$. Hence, for $k =1$, an output of type~\eqref{output:a} or~\eqref{output:b} can be found by querying the oracle for the colors of all the vertices in time polynomial in $n$.
For $k \geq 2$, consider the algorithm that given an input as above acts as follows (see Algorithm~\ref{alg:elimination}).

The algorithm first selects uniformly and independently $m$ random sets $A_1, \ldots, A_m \in \binom{X}{k}_{\mathrm{stab}}$ for $m=b \cdot n^2 \cdot \ln (n/\eps)$, where $b$ is some fixed constant to be determined later (see lines~\ref{line:pick_s}--\ref{line:samples}), and queries the oracle for their colors.
If the sampled sets include two vertices that form a monochromatic edge in $K(\binom{X}{k}_{\mathrm{stab}})$, then the algorithm returns such an edge (output of type~\eqref{output:a}; see line~\ref{line:output_a}).
If they include a vertex whose color does not belong to $C$, then the algorithm returns it (output of type~\eqref{output:b}; see line~\ref{line:output_b}).
Otherwise, the algorithm defines $i^* \in C$ as a color that appears on a largest number of sampled sets $A_t$ (see lines~\ref{line:alpha_s}--\ref{line:alpha_max}).
It further defines $j^* \in X$ as an element that belongs to a largest number of sampled sets $A_t$ with $c(A_t)=i^*$ (see lines~\ref{line:gamma_s}--\ref{line:gamma_max}).
Then, the algorithm returns the pair $(i^*,j^*)$ (output of type~\eqref{output:c}; see line~\ref{line:output_c}).

\begin{algorithm}[ht]
    \caption{Element Elimination Algorithm (Theorem~\ref{thm:one_step})}
    \textbf{Input:} Integers $n$ and $k \geq 2$, a set $X \subseteq [n]$ of size $|X| \geq 8 k^4$, a set of colors $C \subseteq [n-2k+1]$ of size $|C|=|X|-2k+1$, and an oracle access to a coloring $c: \binom{X}{k}_{\mathrm{stab}} \rightarrow [n-2k+1]$ of $K(\binom{X}{k}_{\mathrm{stab}})$.\\
    \textbf{Output:} \eqref{output:a} A monochromatic edge of $K(\binom{X}{k}_{\mathrm{stab}})$, or \eqref{output:b} a vertex $A \in \binom{X}{k}_{\mathrm{stab}}$ satisfying $c(A) \notin C$, or \eqref{output:c} a color $i \in C$ and an element $j \in X$ such that for every $A \in \binom{[n]}{k}_{\mathrm{stab}}$ with $j \notin A$, a random vertex $B \in \binom{X}{k}_{\mathrm{stab}}$ chosen uniformly satisfies $c(B)=i$ and $A \cap B = \emptyset$ with probability at least $\frac{1}{9n}$.
    \begin{algorithmic}[1]
        \State{$m \leftarrow b \cdot n^2 \cdot \ln (n/\eps)$ for a sufficiently large constant $b$}\label{line:pick_s}
        \State{pick uniformly and independently at random sets $A_1, \ldots, A_m \in \binom{X}{k}_{\mathrm{stab}}$}\label{line:samples}
        \ForAll{$t,t' \in [m]$}
            \If{$c(A_t) =c(A_{t'})$ and $A_t \cap A_{t'} = \emptyset$}
                \State{\textbf{return} $\{A_t,A_{t'}\}$}\label{line:output_a}\Comment{output of type~\eqref{output:a}}
            \EndIf
        \EndFor
        \ForAll{$t \in [m]$}
            \If{$c(A_t) \notin C$}
                \State{\textbf{return} $A_t$}\label{line:output_b}\Comment{output of type~\eqref{output:b}}
            \EndIf
        \EndFor
        \ForAll{$i \in C$}\label{line:alpha_s}
            \State{$\widetilde{\alpha}_i \leftarrow \frac{1}{m} \cdot  |\{ t \in [m] \mid c(A_t)=i\}|$}\label{line:alpha}
        \EndFor
        \State{$i^* \leftarrow $~an $i \in C$ with largest value of $\widetilde{\alpha}_{i}$}\label{line:alpha_max}
        \ForAll{$j \in X$}\label{line:gamma_s}
            \State{$\widetilde{\gamma}_{i^*,j} \leftarrow \frac{1}{m} \cdot  |\{ t \in [m] \mid c(A_t)=i^*~\mbox{and}~j \in A_t\}|$}\label{line:gamma}
        \EndFor
        \State{$j^* \leftarrow $~a $j \in X$ with largest value of $\widetilde{\gamma}_{i^*,j}$}\label{line:gamma_max}
        \State{\textbf{return} $(i^*,j^*)$}\label{line:output_c}\Comment{output of type~\eqref{output:c}}
    \end{algorithmic}
    \label{alg:elimination}
\end{algorithm}

The running time of the algorithm is clearly polynomial in $n$ and in $\ln(1/\eps)$.
An output of the algorithm is called valid if it satisfies one of the three conditions described in the statement of the theorem.
We turn to prove that for every input, the algorithm returns a valid output, of type~\eqref{output:a},~\eqref{output:b}, or~\eqref{output:c}, with probability at least $1-\eps$.
We start with the following lemma that shows that if the input coloring has a large color class with no popular element, then the algorithm returns a valid output of type~\eqref{output:a} with the desired probability.

\begin{lemma}\label{lemma:elimination_no_pop}
Suppose that the input coloring $c$ has a color class $\calF \subseteq \binom{X}{k}_{\mathrm{stab}}$ of size $|\calF| \geq \frac{1}{2|X|} \cdot \big |\binom{X}{k}_{\mathrm{stab}} \big |$ such that every element of $X$ belongs to at most half of the sets of $\calF$.
Then, Algorithm~\ref{alg:elimination} returns a monochromatic edge with probability at least $1-\eps$.
\end{lemma}

\begin{proof}
Let $\calF$ be as in the lemma.
Using the assumptions $k \geq 2$ and $|X| \geq 8 k^4$ and using the one-to-one correspondence between $\binom{X}{k}_{\mathrm{stab}}$ and the vertex set of $S(|X|,k)$, we can apply Corollary~\ref{cor:at_most_gamma} with $\gamma = \frac{1}{2}$ to obtain that two random sets chosen uniformly and independently from $\calF$ are adjacent in $K(\calF)$ with probability at least $\frac{3}{8} \cdot (\frac{3}{4}-\gamma) = \frac{3}{32}$.
Further, since the family $\calF$ satisfies $|\calF| \geq \frac{1}{2|X|} \cdot \big | \binom{X}{k}_{\mathrm{stab}} \big |$, a random vertex chosen uniformly from $\binom{X}{k}_{\mathrm{stab}}$ belongs to $\calF$ with probability at least $\frac{1}{2|X|}$. Hence, for two random vertices chosen uniformly and independently from $\binom{X}{k}_{\mathrm{stab}}$, the probability that they both belong to $\calF$ is at least $(\frac{1}{2|X|})^2$, and conditioned on this event, their probability to form an edge in $K(\calF)$ is at least $\frac{3}{32}$. This implies that the probability that two random vertices chosen uniformly and independently from $\binom{X}{k}_{\mathrm{stab}}$ form a monochromatic edge in $K(\binom{X}{k}_{\mathrm{stab}})$ is at least $(\frac{1}{2|X|})^2 \cdot \frac{3}{32} = \frac{3}{128|X|^2}$.

Now, by considering $\lfloor m/2 \rfloor$ pairs of the random sets chosen by Algorithm~\ref{alg:elimination} (line~\ref{line:samples}), it follows that the probability that no pair forms a monochromatic edge does not exceed
\[ \Big (1-\tfrac{3}{128|X|^2} \Big )^{\lfloor m/2\rfloor } \leq e^{-3 \cdot \lfloor m/2\rfloor / (128|X|^2)} \leq \eps,\]
where the last inequality follows by $|X| \leq n$ and by the choice of $m$, assuming that the constant $b$ is sufficiently large.
It thus follows that with probability at least $1-\eps$, the algorithm returns a monochromatic edge, as required.
\end{proof}

We next handle the case in which every large color class of the input coloring has a popular element.
To do so, we first show that the samples of the algorithm provide a good estimation for the fraction of vertices in each color class as well as for the fraction of vertices that share any given element in each color class.
For every color $i \in C$, let $\alpha_i$ denote the fraction of vertices of $K(\binom{X}{k}_{\mathrm{stab}})$ colored $i$, that is,
\[\alpha_i = \frac{|\{ A \in \binom{X}{k}_{\mathrm{stab}} \mid c(A)=i\}|}{\big | \binom{X}{k}_{\mathrm{stab}} \big |},\]
and let $\widetilde{\alpha}_i$ denote the fraction of the vertices sampled by the algorithm that are colored $i$ (see line~\ref{line:alpha}).
Similarly, for every $i \in C$ and $j \in X$, let $\gamma_{i,j}$ denote the fraction of vertices of $K(\binom{X}{k}_{\mathrm{stab}})$ colored $i$ that include $j$, that is,
\[\gamma_{i,j} = \frac{|\{ A \in \binom{X}{k}_{\mathrm{stab}} \mid c(A)=i~\mbox{and}~j \in A\}|}{ \big | \binom{X}{k}_{\mathrm{stab}} \big |},\]
and let $\widetilde{\gamma}_{i,j}$ denote the fraction of the vertices sampled by the algorithm that are colored $i$ and include $j$.
Let $E$ denote the event that
\begin{eqnarray}\label{eq:approx}
|\alpha_i - \widetilde{\alpha}_i| \leq \frac{1}{9|X|}~~\mbox{and}~~|\gamma_{i,j} - \widetilde{\gamma}_{i,j}| \leq \frac{1}{9|X|}~~\mbox{for all}~i \in C,~j \in X.
\end{eqnarray}

By a standard concentration argument, we obtain the following lemma.
\begin{lemma}\label{lemma:E}
The probability of the event $E$ is at least $1-\eps$.
\end{lemma}
\begin{proof}
By the Chernoff--Hoeffding bound (Theorem~\ref{thm:chernoff}) applied with $\mu=\frac{1}{9|X|}$, the probability that an inequality from~\eqref{eq:approx} does not hold is at most
\[2 \cdot e^{-2m/(81|X|^2)} \leq \frac{\eps}{n^2},\]
where the inequality follows by $|X| \leq n$ and by the choice of $m$, assuming that the constant $b$ is sufficiently large.
By the union bound over all the colors $i \in C$ and all the pairs $(i,j) \in C \times X$, that is, over $|C|+|C| \cdot |X| = |C| \cdot (1+|X|) \leq n^2$ events, we get that all the inequalities in~\eqref{eq:approx} hold with probability at least $1 - n^2 \cdot \frac{\eps}{n^2} = 1-\eps$, as required.
\end{proof}

We now show that if every large color class of the input coloring has a popular element and the event $E$ occurs, then the algorithm returns a valid output.

\begin{lemma}\label{lemma:elimination_pop}
Suppose that the coloring $c$ satisfies that for every color class $\calF \subseteq \binom{X}{k}_{\mathrm{stab}}$ whose size satisfies $|\calF| \geq \frac{1}{2|X|} \cdot \big | \binom{X}{k}_{\mathrm{stab}} \big |$, there exists an element of $X$ that belongs to more than half of the sets of $\calF$.
Then, if the event $E$ occurs, Algorithm~\ref{alg:elimination} returns a valid output.
\end{lemma}

\begin{proof}
Assume that the event $E$ occurs.
If Algorithm~\ref{alg:elimination} returns an output of type~\eqref{output:a} or~\eqref{output:b}, i.e., a monochromatic edge or a vertex whose color does not belong to $C$, then the output is verified before it is returned and is thus valid.
So suppose that the algorithm returns a pair $(i^*,j^*) \in C \times X$.
Recall that the color $i^*$ is defined by Algorithm~\ref{alg:elimination} as an $i \in C$ with largest value of $\widetilde{\alpha}_i$ (see line~\ref{line:alpha_max}).
Since the colors of all the sampled sets belong to $C$, it follows that $\sum_{i \in C}{\widetilde{\alpha}_i} = 1$, and thus
\begin{eqnarray}\label{eq:alpha_i_star}
\widetilde{\alpha}_{i^*} \geq \frac{1}{|C|} \geq \frac{1}{|X|},
\end{eqnarray}
where the last inequality follows by $|C| = |X|-2k+1 \leq |X|$.

Let $\calF$ be the family of vertices of $K(\binom{X}{k}_{\mathrm{stab}})$ colored $i^*$, i.e., $\calF = \{A \in \binom{X}{k}_{\mathrm{stab}} \mid c(A)=i^*\}$.
Since the event $E$ occurs (see~\eqref{eq:approx}), it follows from~\eqref{eq:alpha_i_star} that
\begin{eqnarray}\label{eq:alpha_i_no_tilde}
\alpha_{i^*} \geq \widetilde{\alpha}_{i^*} - \frac{1}{9|X|} \geq \frac{1}{|X|} - \frac{1}{9|X|} = \frac{8}{9|X|},
\end{eqnarray}
and thus
\[|\calF| = \alpha_{i^*} \cdot \Big | \binom{X}{k}_{\mathrm{stab}} \Big | \geq \frac{8}{9|X|} \cdot \Big | \binom{X}{k}_{\mathrm{stab}} \Big |.\]
Hence, by the assumption of the lemma, there exists an element $j \in X$ that belongs to more than half of the sets of $\calF$, that is, $\gamma_{i^*,j} > \frac{1}{2} \cdot \alpha_{i^*}$.
Since the event $E$ occurs, it follows that this $j$ satisfies $\widetilde{\gamma}_{i^*,j} > \frac{1}{2} \cdot \alpha_{i^*} - \frac{1}{9|X|}$.
Recalling that the element $j^*$ is defined by Algorithm~\ref{alg:elimination} as a $j \in X$ with largest value of $\widetilde{\gamma}_{i^*,j}$ (see line~\ref{line:gamma_max}), it must satisfy $\widetilde{\gamma}_{i^*,j^*} > \frac{1}{2} \cdot \alpha_{i^*} - \frac{1}{9|X|}$, and using again the fact that the event $E$ occurs, we derive that $\gamma_{i^*,j^*} \geq \widetilde{\gamma}_{i^*,j^*} -\frac{1}{9|X|} > \frac{1}{2} \cdot \alpha_{i^*} - \frac{2}{9|X|}$. This implies that $j^*$ belongs to at least $\gamma$ fraction of the sets of $\calF$ for $\gamma = \frac{\gamma_{i^*,j^*}}{\alpha_{i^*}} > \frac{1}{2} - \frac{2}{9 |X| \cdot \alpha_{i^*}} \geq \frac{1}{4}$, where the last inequality follows from~\eqref{eq:alpha_i_no_tilde}.

By $k \geq 2$ and $|X| \geq 8k^4 \geq 16 k^3$, we can apply Corollary~\ref{cor:at_least_gamma} with $\calF$, $j^*$, and $\gamma \geq \frac{1}{4}$ to obtain that for every set $A \in \binom{[n]}{k}_{\mathrm{stab}}$ with $j^* \notin A$, the probability that a random set chosen uniformly from $\calF$ is disjoint from $A$ is at least $\gamma-\frac{1}{8} \geq \frac{1}{8}$. Since the probability that a random set chosen uniformly from $\binom{X}{k}_{\mathrm{stab}}$ belongs to $\calF$ is at least $\frac{8}{9|X|}$, it follows that the probability that a random set $B$ chosen uniformly from $\binom{X}{k}_{\mathrm{stab}}$ satisfies $c(B) = i^*$ and $A \cap B = \emptyset$ is at least
\[\frac{8}{9|X|} \cdot \frac{1}{8} = \frac{1}{9|X|} \geq \frac{1}{9n}.\]
This implies that $(i^*,j^*)$ is a valid output of type~\eqref{output:c}.
\end{proof}

Equipped with Lemmas~\ref{lemma:elimination_no_pop},~\ref{lemma:E}, and~\ref{lemma:elimination_pop}, we are ready to derive the correctness of Algorithm~\ref{alg:elimination} and to complete the proof of Theorem~\ref{thm:one_step}.
If the input coloring $c$ has a color class $\calF \subseteq \binom{X}{k}_{\mathrm{stab}}$ of size $|\calF| \geq \frac{1}{2|X|} \cdot \big | \binom{X}{k}_{\mathrm{stab}} \big |$ such that every element of $X$ belongs to at most half of the sets of $\calF$, then, by Lemma~\ref{lemma:elimination_no_pop}, the algorithm returns with probability at least $1-\eps$ a monochromatic edge, i.e., a valid output of type~\eqref{output:a}.
Otherwise, the input coloring $c$ satisfies that for every color class $\calF \subseteq \binom{X}{k}_{\mathrm{stab}}$ of size $|\calF| \geq \frac{1}{2|X|} \cdot \big | \binom{X}{k}_{\mathrm{stab}} \big |$ there exists an element of $X$ that belongs to more than half of the sets of $\calF$. By Lemma~\ref{lemma:E}, the event $E$ occurs with probability at least $1-\eps$, implying by Lemma~\ref{lemma:elimination_pop} that with such probability, the algorithm returns a valid output.
It thus follows that for every input coloring the algorithm returns a valid output with probability at least $1-\eps$, and we are done.
\end{proof}

\subsection{The Fixed-Parameter Algorithm for the \texorpdfstring{$\SchrijverP$}{Schrijver} Problem}

We now present our fixed-parameter algorithm for the $\SchrijverP$ problem and complete the proof of Theorem~\ref{thm:AlgoKneserNew}.

\begin{proof}[ of Theorem~\ref{thm:AlgoKneserNew}]
Suppose that we are given, for integers $n$ and $k$ with $n \geq 2k$, an oracle access to a coloring $c: \binom{[n]}{k}_{\mathrm{stab}} \rightarrow [n-2k+1]$ of the vertices of the Schrijver graph $S(n,k)$.
It suffices to present an algorithm with success probability at least $1/2$, because the latter can be easily amplified by repetitions.
Our algorithm has two phases, as described below (see Algorithm~\ref{alg:Kneser}).

\begin{algorithm}[!htp]
    \caption{The Algorithm for the $\SchrijverP$ Problem (Theorem~\ref{thm:AlgoKneserNew})}
    \textbf{Input:} Integers $n,k$ with $n \geq 2k$ and an oracle access to a coloring $c: \binom{[n]}{k}_{\mathrm{stab}} \rightarrow [n-2k+1]$.\\
    \textbf{Output:} A monochromatic edge of $S(n,k)$.
    \begin{algorithmic}[1]
        \State{$s \leftarrow \max (n-8k^4,0)$,~~$X_0 \leftarrow [n]$,~~$C_0 \leftarrow [n-2k+1]$}\Comment{$|C_0|=|X_0|-2k+1$}
        \ForAll{$l = 0,1,\ldots,s-1$}\label{line:first_p}\Comment{first phase}
            \State{\textbf{call} Algorithm~\ref{alg:elimination} with $n$, $k$, $X_l$, $C_l$, $\eps = \frac{1}{4n}$ and with the restriction of $c$ to $\binom{X_l}{k}_{\mathrm{stab}}$}
            \If{Algorithm~\ref{alg:elimination} returns an edge $\{A,B\}$ with $c(A)=c(B)$}\label{line:a_s}\Comment{output of type~\eqref{output:a}}
                \State{\textbf{return} $\{A,B\}$}
            \EndIf\label{line:a_e}
            \If{Algorithm~\ref{alg:elimination} returns a vertex $A \in \binom{X_l}{k}_{\mathrm{stab}}$ with $c(A) = i_r \notin C_l$}\label{line:b_s}\label{line:B_t_1}\Comment{output of type~\eqref{output:b}}
                \ForAll{$t \in [18n]$}
                    \State{pick uniformly at random a set $B_t \in \binom{X_r}{k}_{\mathrm{stab}}$}
                    \If{$c(B_t) = i_r$ and $A \cap B_t = \emptyset$}
                        \State{\textbf{return} $\{A,B_t\}$}
                    \EndIf
                \EndFor
                \State{\textbf{return} `failure'}
            \EndIf\label{line:b_e}
            \If{Algorithm~\ref{alg:elimination} returns a pair $(i_l,j_l) \in C_l \times X_l$}\label{line:c_s}\Comment{output of type~\eqref{output:c}}
                \State{$X_{l+1} \leftarrow X_l \setminus \{j_l\},~~C_{l+1} \leftarrow C_l \setminus \{i_l\}$}\Comment{$|C_{l+1}|=|X_{l+1}|-2k+1$}
            \EndIf\label{line:c_e}
        \EndFor\label{line:first_p_end}
        \State{query the oracle for the colors of all the vertices of $K(\binom{X_s}{k}_{\mathrm{stab}})$}\Comment{second phase}
        \If{there exists a vertex $A \in \binom{X_s}{k}_{\mathrm{stab}}$ of color $c(A) = i_r \notin C_s$}\label{line:bb_s}
            \ForAll{$t \in [18n]$}
                \State{pick uniformly at random a set $B_t \in \binom{X_r}{k}_{\mathrm{stab}}$}
                \If{$c(B_t) = i_r$ and $A \cap B_t = \emptyset$}\label{line:B_t_2}
                    \State{\textbf{return} $\{A,B_t\}$}
                \EndIf
            \EndFor
            \State{\textbf{return} `failure'}\label{line:bb_e}
        \Else \State{find $A,B \in \binom{X_s}{k}_{\mathrm{stab}}$ satisfying $c(A)=c(B)$ and $A \cap B = \emptyset$}\label{line:kneser_s}\Comment{exist by Theorem~\ref{thm:KneserS}~\cite{LovaszKneser}}
            \State{\textbf{return} $\{A,B\}$}\label{line:kneser_e}
        \EndIf
    \end{algorithmic}
    \label{alg:Kneser}
\end{algorithm}

In the first phase, the algorithm repeatedly applies the `element elimination' algorithm given in Theorem~\ref{thm:one_step} (Algorithm~\ref{alg:elimination}).
Initially, we define
\[s=\max(n-8k^4,0),~~X_0 = [n],~~\mbox{and}~~C_0 = [n-2k+1].\]
In the $l$th iteration, $0 \leq l < s$, we call Algorithm~\ref{alg:elimination} with $n$, $k$, $X_l$, $C_l$, $\eps = \frac{1}{4n}$ and with the restriction of the given coloring $c$ to the vertices of $\binom{X_l}{k}_{\mathrm{stab}}$ to obtain with probability at least $1-\eps$,
\begin{enumerate}[(a).]
    \item\label{output:a1} a monochromatic edge $\{A,B\}$ of $K(\binom{X_l}{k}_{\mathrm{stab}})$, or
    \item\label{output:b1} a vertex $A \in \binom{X_l}{k}_{\mathrm{stab}}$ satisfying $c(A) \notin C_l$, or
    \item\label{output:c1} a color $i_l \in C_l$ and an element $j_l \in X_l$ such that for every $A \in \binom{[n]}{k}_{\mathrm{stab}}$ with $j_l \notin A$, a random vertex $B$ chosen uniformly from $\binom{X_l}{k}_{\mathrm{stab}}$ satisfies $c(B)=i_l$ and $A \cap B = \emptyset$ with probability at least $\frac{1}{9n}$.
\end{enumerate}
As will be explained shortly, if the output of Algorithm~\ref{alg:elimination} is of type~\eqref{output:a1} or~\eqref{output:b1} then we either return a monochromatic edge or declare `failure', and if the output is a pair $(i_l,j_l)$ of type~\eqref{output:c1} then we define $X_{l+1} = X_l \setminus \{j_l\}$ and $C_{l+1} = C_l \setminus \{i_l\}$ and, as long as $l < s$, proceed to the next call of Algorithm~\ref{alg:elimination}. Note that the sizes of the sets $X_l$ and $C_l$ are reduced by $1$ in every iteration, hence we maintain the equality $|C_l| = |X_l| - 2k+1$ for all $l$.
We now describe how the algorithm acts in the $l$th iteration for each type of output returned by Algorithm~\ref{alg:elimination}.

If the output is of type~\eqref{output:a1}, then the returned monochromatic edge of $K(\binom{X_l}{k}_{\mathrm{stab}})$ is also a monochromatic edge of $S(n,k)$, so we return it (see lines~\ref{line:a_s}--\ref{line:a_e}).

If the output is of type~\eqref{output:b1}, then we are given a vertex $A \in \binom{X_l}{k}_{\mathrm{stab}}$ satisfying $c(A) = i_r \notin C_l$ for some $r<l$. Since $i_r \notin C_l$, it follows that $j_r \notin X_l$, and thus $j_r \notin A$.
In this case, we pick uniformly and independently $18n$ random sets from $\binom{X_r}{k}_{\mathrm{stab}}$ and query the oracle for their colors.
If we find a vertex $B$ that forms together with $A$ a monochromatic edge in $S(n,k)$, we return the monochromatic edge $\{A,B\}$, and otherwise we declare `failure' (see lines~\ref{line:b_s}--\ref{line:b_e}).

If the output of Algorithm~\ref{alg:elimination} is a pair $(i_l,j_l)$ of type~\eqref{output:c1}, then we define, as mentioned above, the sets $X_{l+1} = X_l \setminus \{j_l\}$ and $C_{l+1} = C_l \setminus \{i_l\}$ (see lines~\ref{line:c_s}--\ref{line:c_e}).
Observe that for $0 \leq l < s$, it holds that $|X_{l}| = n-l > n-s = 8 k^4$, allowing us, by Theorem~\ref{thm:one_step}, to call Algorithm~\ref{alg:elimination} in the $l$th iteration.

In case that all the $s$ calls to Algorithm~\ref{alg:elimination} return an output of type~\eqref{output:c1}, we arrive to the second phase of the algorithm.
Here, we are given the sets $X_s$ and $C_s$ that satisfy $|X_s| = n-s \leq 8k^4$ and $|C_s| = |X_s|-2k+1$, and we query the oracle for the colors of each and every vertex of the graph $K(\binom{X_s}{k}_{\mathrm{stab}})$.
If we find a vertex $A \in \binom{X_s}{k}_{\mathrm{stab}}$ satisfying $c(A) = i_r \notin C_s$ for some $r<s$, then, as before, we pick uniformly and independently $18n$ random sets from $\binom{X_r}{k}_{\mathrm{stab}}$ and query the oracle for their colors. If we find a vertex $B$ that forms together with $A$ a monochromatic edge in $S(n,k)$, we return the monochromatic edge $\{A,B\}$, and otherwise we declare `failure' (see lines~\ref{line:bb_s}--\ref{line:bb_e}).
Otherwise, all the vertices of $K(\binom{X_s}{k}_{\mathrm{stab}})$ are colored by colors from $C_s$. By Theorem~\ref{thm:KneserS}, the chromatic number of the graph $K(\binom{X_s}{k}_{\mathrm{stab}})$, which is isomorphic to $S(|X_s|,k)$, is $|X_s|-2k+2 > |C_s|$. Hence, there must exist a monochromatic edge in $K(\binom{X_s}{k}_{\mathrm{stab}})$, and by checking all the pairs of its vertices we find such an edge and return it (see lines~\ref{line:kneser_s}--\ref{line:kneser_e}).

We turn to analyze the probability that Algorithm~\ref{alg:Kneser} returns a monochromatic edge.
Note that whenever the algorithm returns an edge, it ensures that it is monochromatic.
Hence, it suffices to show that the algorithm declares `failure' with probability at most $1/2$.
To see this, recall that the algorithm calls Algorithm~\ref{alg:elimination} at most $s < n$ times, and that by Theorem~\ref{thm:one_step} the probability that its output is not valid is at most $\eps = \frac{1}{4n}$. By the union bound, the probability that any of the calls to Algorithm~\ref{alg:elimination} returns an invalid output does not exceed $1/4$.
The only situation in which Algorithm~\ref{alg:Kneser} declares `failure' is when it finds, for some $r <s$, a vertex $A \in \binom{[n]}{k}_{\mathrm{stab}}$ with $c(A)=i_r$ and $j_r \notin A$, and none of the $18n$ sampled sets $B \in \binom{X_r}{k}_{\mathrm{stab}}$ satisfies $c(B)=i_r$ and $A \cap B = \emptyset$ (see lines~\ref{line:b_s}--\ref{line:b_e},~\ref{line:bb_s}--\ref{line:bb_e}).
However, assuming that all the calls to Algorithm~\ref{alg:elimination} return valid outputs, the $r$th run guarantees, by Theorem~\ref{thm:one_step}, that a random vertex $B$ uniformly chosen from $\binom{X_r}{k}_{\mathrm{stab}}$ satisfies $c(B)=i_r$ and $A \cap B = \emptyset$ for the given $A$ with probability at least $\frac{1}{9n}$.
Hence, the probability that the algorithm declares `failure' does not exceed $(1-\frac{1}{9n})^{18n} \leq e^{-2} < \frac{1}{4}$.
Using again the union bound, it follows that the probability that Algorithm~\ref{alg:Kneser} either gets an invalid output from Algorithm~\ref{alg:elimination} or fails to find a vertex that forms a monochromatic edge with a set $A$ as above is at most $1/2$.
Therefore, the probability that Algorithm~\ref{alg:Kneser} successfully finds a monochromatic edge is at least $1/2$, as desired.

We finally analyze the running time of Algorithm~\ref{alg:Kneser}.
In its first phase, the algorithm calls Algorithm~\ref{alg:elimination} at most $s < n$ times, where the running time needed for each call is, by Theorem~\ref{thm:one_step} and by our choice of $\eps$, polynomial in $n$. It is clear that the other operations made throughout this phase can also be implemented in time polynomial in $n$.
In its second phase, the algorithm enumerates all the vertices of $K(\binom{X_s}{k}_{\mathrm{stab}})$. This phase can be implemented in running time polynomial in $n$ and in the number of vertices of this graph. The latter is $\big | \binom{X_s}{k}_{\mathrm{stab}} \big | \leq |X_s|^k \leq (8k^4)^k =  k^{O(k)}$.
It thus follows that the total running time of Algorithm~\ref{alg:Kneser} is $n^{O(1)} \cdot k^{O(k)}$, completing the proof.
\end{proof}

\subsection{Turing Kernelization for the $\KneserP$ and $\SchrijverP$ Problems}\label{sec:kernel}

As mentioned in the introduction, our fixed-parameter algorithm for the $\SchrijverP$ problem can be viewed as a randomized polynomial Turing kernelization algorithm.
In what follows we extend on this aspect of the algorithm.

We start with the definition of a Turing kernelization algorithm as it is usually used in the area of parameterized complexity (see, e.g.,~\cite[Chapter~22]{KernelBook19}).
A decision parameterized problem $P$ is a language of pairs $(x,k)$ where $k$ is an integer referred to as the parameter of $P$.
For a decision parameterized problem $P$ with parameter $k$ and for a computable function $f$, a Turing kernelization algorithm of size $f$ for $P$ is an algorithm that decides whether an input $(x,k)$ belongs to $P$ in polynomial time given an oracle access that decides membership in $P$ for instances $(x',k')$ with $|x'| \leq f(k)$ and $k' \leq f(k)$. The kernelization algorithm is said to be polynomial if $f$ is a polynomial function.

The $\KneserP$ and $\SchrijverP$ problems, however, are total search problems whose input is given as an oracle access.
Hence, for an instance associated with integers $n$ and $k$, we require a Turing kernelization algorithm to find a solution using an oracle that finds solutions for instances associated with integers $n'$ and $k'$ that are bounded by some function of $k$. We further require the algorithm to be able to simulate the queries of the produced instances using queries to the oracle associated with the original input.
Note that two different types of oracles are involved here: the oracle that solves the $\KneserP$ (resp. $\SchrijverP$) problem on graphs $K(n',k')$ (reps. $S(n',k')$) with bounded $n',k'$, and the oracle that supplies an access to the instance of the $\KneserP$ (resp. $\SchrijverP$) problem.

We claim that the proof of Theorem~\ref{thm:AlgoKneserNew} shows that the $\SchrijverP$ problem admits a randomized polynomial Turing kernelization algorithm in the following manner.
Given an instance of the $\SchrijverP$ problem, i.e., a coloring $c: \binom{[n]}{k}_{\mathrm{stab}} \rightarrow [n-2k+1]$ of the vertices of a Schrijver graph $S(n,k)$ for integers $n$ and $k$ with $n \geq 2k$, the first phase of Algorithm~\ref{alg:Kneser} runs in time polynomial in $n$ and either finds a monochromatic edge or produces a ground set $X_s \subseteq [n]$ and a set of colors $C_s \subseteq [n-2k+1]$ satisfying $|C_s| = |X_s|-2k+1$ and $|X_s| = O(k^4)$ (see lines~\ref{line:first_p}--\ref{line:first_p_end}).
In the latter case, the restriction of the input coloring $c$ to the vertices of the graph $K(\binom{X_s}{k}_{\mathrm{stab}})$ is not guaranteed to use only colors from $C_s$, as required by the definition of the $\SchrijverP$ problem.
Yet, by applying the oracle to this restriction of $c$, simulating its queries using the access to the coloring $c$ of $S(n,k)$, we either get a monochromatic edge in $K(\binom{X_s}{k}_{\mathrm{stab}})$ or find a vertex whose color does not belong to $C_s$. In both cases, a solution to the original instance can be efficiently found with high probability.
Indeed, if a monochromatic edge is returned then it forms a monochromatic edge of $S(n,k)$.
Otherwise, as shown in the analysis of Algorithm~\ref{alg:Kneser}, a vertex $A \in \binom{X_s}{k}_{\mathrm{stab}}$ whose color does not belong to $C_s$ can be used to efficiently find with high probability a vertex $B \in \binom{[n]}{k}_{\mathrm{stab}}$ such that $\{A,B\}$ is a monochromatic edge (see lines~\ref{line:bb_s}--\ref{line:bb_e}).
Since the size of $X_s$ is bounded by $O(k^4)$, we refer to this kernelization algorithm as polynomial.

We finally mention that the approach of the proof of Theorem~\ref{thm:AlgoKneserNew} can also be used to directly provide an algorithm for the $\KneserP$ problem (see~\cite{Haviv22a}). Using Proposition~\ref{prop:k^3}, it can be shown that the $\KneserP$ problem admits a randomized polynomial Turing kernelization algorithm in the sense described above, with a slightly improved bound of $O(k^3)$ on the size of the ground set produced by the algorithm. To avoid repetitions, we omit the details.

\section{An Algorithm for the $\KneserP$ Problem Based on Schrijver Graphs}\label{sec:AlgoSchr}

In this section we present a simple deterministic algorithm for the $\KneserP$ problem and confirm Theorem~\ref{thm:KneserAlgSch}.

\begin{proof}[ of Theorem~\ref{thm:KneserAlgSch}]
Consider the algorithm that given integers $n$ and $k$ with $n \geq 2k$ and an oracle access to a coloring $c: \binom{[n]}{k} \rightarrow [n-2k+1]$ of the vertices of the Kneser graph $K(n,k)$, enumerates all vertices of the Schrijver graph $S(n,k)$, i.e., the sets of $\binom{[n]}{k}_{\mathrm{stab}}$, and queries the oracle for their colors. Then, the algorithm goes over all pairs of those vertices and returns a pair that forms a monochromatic edge. The existence of such an edge follows from Theorem~\ref{thm:KneserS}, which asserts that $\chi(S(n,k))= n-2k+2$. The running time of the algorithm is polynomial in the number of vertices of $S(n,k)$, which is equal, by Claim~\ref{claim:|V(S(n,k))|}, to $\frac{n}{k} \cdot \binom{n-k-1}{k-1} = \frac{n}{k} \cdot \binom{n-k-1}{n-2k} \leq n^{\min(k,n-2k+1)}$.
\end{proof}

\section{The $\Agree$ Problem}\label{sec:agree}

In this section we study the $\Agree$ problem.
After presenting its formal definition, we provide efficient algorithms for certain families of instances and explore its connections to the $\KneserP$ problem.

\subsection{The Definition}
For a collection $M$ of $m$ items, consider $\ell$ utility functions $u_i : P(M) \rightarrow \Q^{\geq 0}$ for $i \in [\ell]$ that map every subset of $M$ to a non-negative value. The functions $u_i$ are assumed to be monotone, that is, $u_i(A) \leq u_i(B)$ whenever $A \subseteq B$.
We refer to $u_i$ as the utility function associated with  the $i$th agent.
A set $S \subseteq M$ is said to be {\em agreeable} to agent $i$ if $u_i(S) \geq u_i(M \setminus S)$, that is, the $i$th agent values $S$ at least as much as its complement.
The following theorem was proved by Manurangsi and Suksompong~\cite{ManurangsiS19} (see~\cite{GoldbergHIMS20} for an alternative proof).
\begin{theorem}[\cite{ManurangsiS19}]\label{thm:MS}
For every collection $M$ of $m$ items and for every $\ell$ agents with monotone utility functions, there exists a set $S \subseteq M$ of size
\[ |S| \leq \min \Big ( \Big \lfloor \frac{m+\ell}{2} \Big \rfloor, m \Big )\]
that is agreeable to all agents.
\end{theorem}
\noindent
The $\Agree$ problem is defined as follows (see Remark~\ref{remark:syntactic}).
\begin{definition}\label{def:Agree}
In the $\Agree$ problem, given a collection $M$ of $m$ items and $\ell$ monotone utility functions $u_i : P(M) \rightarrow \Q^{\geq 0}$ for $i \in [\ell]$ associated with $\ell$ agents, the goal is to find a set $S \subseteq M$ of size $|S| \leq \min ( \lfloor \frac{m+\ell}{2} \rfloor, m )$ that is agreeable to all agents.
In the black-box input model, the utility functions are given as an oracle access that for $i \in [\ell]$ and $A \subseteq M$ returns $u_i(A)$.
In the white-box input model, the utility functions are given by a Boolean circuit that for $i \in [\ell]$ and $A \subseteq M$ computes $u_i(A)$.
\end{definition}
\noindent
By Theorem~\ref{thm:MS}, every instance of the $\Agree$ problem has a solution.
Note that for instances with $\ell \geq m$, the collection $M$ forms a solution.

\subsection{Algorithms for the $\Agree$ Problem}\label{sec:Algo_Agree}

Our algorithms for the $\Agree$ problem are obtained by combining our algorithms for the $\KneserP$ problem with the relation discovered in~\cite{ManurangsiS19} between the problems.
As mentioned before, using the proof idea of Theorem~\ref{thm:MS} in~\cite{ManurangsiS19}, it can be shown that the $\Agree$ problem is efficiently reducible to the $\KneserP$ problem.
For completeness, we describe below the reduction and some of its properties.
Note that we consider here the problems in the black-box input model, but the same reduction is applicable in the white-box input model as well (see Section~\ref{sec:models}).

\paragraph{The reduction.}
Consider an instance of the $\Agree$ problem, i.e., a collection $M=[m]$ of items and $\ell$ monotone utility functions $u_i : P(M) \rightarrow \Q^{\geq 0}$ for $i \in [\ell]$.
It can be assumed that $\ell < m$, as otherwise the collection $M$ forms a solution, and the reduction can produce an arbitrary instance.
Put $k = \lceil\frac{m-\ell}{2}\rceil$, and define a coloring $c: \binom{[m]}{k} \rightarrow [\ell]$ of the vertices of the Kneser graph $K(m,k)$ as follows. For a $k$-subset $A \subseteq [m]$, $c(A)$ is the smallest $i \in [\ell]$ satisfying $u_i(A) > u_i(M \setminus A)$ if such an $i$ exists, and $c(A) = \ell$ otherwise.
(Note that in the latter case, $c(A)$ can be defined arbitrarily, and we define it to be $\ell$ just for concreteness.)
The definition of $k$ implies that $m-2k+1 \geq m-2 \cdot (\frac{m-\ell+1}{2})+1 = \ell$, hence the coloring $c$ uses colors from $[m-2k+1]$, as required.

\paragraph{Correctness.}
For $\ell<m$, consider a solution to the constructed $\KneserP$ instance, that is, two disjoint $k$-subsets $A,B$ of $[m]$ satisfying $c(A)=c(B)$.
Consider the complement sets $M \setminus A$ and $M \setminus B$ whose size satisfies $|M \setminus A| = |M \setminus B| = m- k =\lfloor \frac{m+\ell}{2}\rfloor$.
We claim that at least one of them is agreeable to all agents and thus forms a solution to the given instance of the $\Agree$ problem.
To see this, let $i \in [\ell]$ denote the color of $A$ and $B$.
Suppose in contradiction that neither of the sets $M \setminus A$ and $M \setminus B$ is agreeable to all agents, hence each of them has an agent that prefers its complement. By the definition of the coloring $c$, it follows that $u_i(A) > u_i(M \setminus A)$ and $u_i(B) > u_i(M \setminus B)$. This implies that
\[ u_i(A) > u_i(M \setminus A) \geq  u_i(B) > u_i(M \setminus B) \geq u_i(A),\]
where the second and fourth inequalities follow from the monotonicity of the function $u_i$ and the fact that $A$ and $B$ are disjoint. This implies that $u_i(A) > u_i(A)$, a contradiction.

Note that given the vertices $A$ and $B$ of a monochromatic edge in the Kneser graph $K(m,k)$ it is possible to efficiently check which of the sets $M \setminus A$ and $M \setminus B$ forms a solution to the $\Agree$ instance by querying the oracle for the values of $u_i(A)$, $u_i(M \setminus A)$, $u_i(B)$, and $u_i(M \setminus B)$ for all $i \in [\ell]$.

\paragraph{Black-box model.}
The reduction is black-box in the following manner.
Given an oracle access to an instance of the $\Agree$ problem with parameters $\ell<m$, it is possible to simulate an oracle access to the above instance of the $\KneserP$ problem with parameters $m$ and $k = \lceil \frac{m-\ell}{2} \rceil$, such that any solution for the latter efficiently yields a solution for the former. Notice that the simulation of the oracle access to the coloring $c$ requires, for any given vertex $A \in \binom{[m]}{k}$, performing at most $2 \ell$ queries to the utility functions of the original instance to determine the values of $u_i(A)$ and $u_i(M \setminus A)$ for all $i \in [\ell]$.

\paragraph{}
The above reduction yields the following proposition.

\begin{proposition}\label{prop:reduction}
Suppose that there exists a (randomized) algorithm for the $\KneserP$ problem that given an oracle access to a coloring $c: \binom{n}{k} \rightarrow [n-2k+1]$ of the vertices of the Kneser graph $K(n,k)$ finds a monochromatic edge in running time $t(n,k)$ (with some given probability).
Then, there exists a (randomized) algorithm for the $\Agree$ problem that given an oracle access to an instance with $m$ items and $\ell$ agents ($\ell<m$), returns a solution in running time $t(m, \lceil \frac{m-\ell}{2} \rceil) \cdot m^{O(1)}$ (with the same probability).
\end{proposition}

Our first algorithm for the $\Agree$ problem, given in Theorem~\ref{thm:AlgoIntro1}, is obtained by combining Theorem~\ref{thm:AlgoKneserNew} and Proposition~\ref{prop:reduction}.

\begin{proof}[ of Theorem~\ref{thm:AlgoIntro1}]
By Theorem~\ref{thm:AlgoKneserNew}, there exists a randomized algorithm for the $\KneserP$ problem that given a coloring of $K(n,k)$ with $n-2k+1$ colors runs in time $t(n,k) \leq n^{O(1)} \cdot k^{O(k)}$ and finds a monochromatic edge with high probability.
Combining this algorithm with Proposition~\ref{prop:reduction}, it follows that there exists a randomized algorithm for the $\Agree$ problem that given an instance with $m$ items and $\ell$ agents ($\ell<m$), returns a solution in time $t(m,k) \cdot m^{O(1)} \leq m^{O(1)} \cdot k^{O(k)}$ for $k = \lceil \frac{m-\ell}{2}\rceil$ with high probability.
\end{proof}

As a corollary, we derive that the problem can be solved in polynomial time when the number of agents $\ell$ is not much smaller than the number of items $m$.
\begin{corollary}\label{cor:m<=l+log/loglog}
For every constant $\tilde{c} >0$, there exists a randomized algorithm that given an oracle access to an instance of the $\Agree$ problem with $m$ items and $\ell$ agents such that \[\ell \geq m - \tilde{c} \cdot \frac{\log m}{\log \log m}\]
runs in time polynomial in $m,\ell$ and returns a solution with high probability.
\end{corollary}

\begin{proof}
Fix a constant $\tilde{c} >0$, and consider an instance of the $\Agree$ problem with $m$ items and $\ell$ agents such that $\ell \geq m - \tilde{c} \cdot \frac{\log m}{\log \log m}$. Consider the algorithm that for $\ell \geq m$, returns the collection of all items, and for $\ell < m$, calls the algorithm given in Theorem~\ref{thm:AlgoIntro1}.
Observe that in the latter case, the integer $k = \lceil \frac{m-\ell}{2}\rceil$ satisfies $k^{O(k)} = m^{O(1)}$, hence the proof is completed.
\end{proof}

Our second algorithm for the $\Agree$ problem, given in Theorem~\ref{thm:AlgoIntro2}, is obtained by combining Theorem~\ref{thm:KneserAlgSch} and Proposition~\ref{prop:reduction}.

\begin{proof}[ of Theorem~\ref{thm:AlgoIntro2}]
By Theorem~\ref{thm:KneserAlgSch}, there exists an algorithm for the $\KneserP$ problem that given a coloring of $K(n,k)$ with $n-2k+1$ colors runs in time $t(n,k)$ and finds a monochromatic edge, where $t(n,k)$ is polynomial in $\frac{n}{k} \cdot \binom{n-k-1}{k-1} \leq n^{\min(k,n-2k+1)}$.
Combining this algorithm with Proposition~\ref{prop:reduction}, it follows that there exists an algorithm for the $\Agree$ problem that given an instance with $m$ items and $\ell$ agents ($\ell<m$), returns a solution in time polynomial in $t(m,k)$ for $k = \lceil \frac{m-\ell}{2}\rceil$, as required.
\end{proof}

As a corollary, we derive that the problem can be solved in polynomial time when the number of agents $\ell$ is a fixed constant.
\begin{corollary}\label{cor:constant_l}
For every constant $\ell$, there exists an algorithm that given an oracle access to an instance of the $\Agree$ problem with $m$ items and $\ell$ agents finds a solution in time polynomial in $m$.
\end{corollary}

\begin{proof}
Fix a constant $\ell \geq 1$, and consider an instance of the $\Agree$ problem with $m$ items and $\ell$ agents. Consider the algorithm that for $\ell \geq m$, returns the collection of all items, and for $\ell < m$, calls the algorithm given in Theorem~\ref{thm:AlgoIntro2}.
Observe that in the latter case, the integer $k = \lceil \frac{m-\ell}{2}\rceil$ satisfies $m-2k+1 \leq \ell+1$, hence the proof is completed.
\end{proof}

\subsection{Subset Queries}

In what follows, we consider a variant of the $\KneserP$ problem in the white-box input model, where the Boolean circuit that represents the coloring of the graph allows an extended type of queries, called {\em subset queries}.
Here, for a coloring $c$ of $K(n,k)$ with $n-2k+1$ colors, the input of the circuit is a pair $(i,B)$ of a color $i \in [n-2k+1]$ and a set $B \subseteq [n]$, and the circuit returns on $(i,B)$ whether $B$ contains a vertex colored $i$.
Note that this circuit enables to efficiently determine the color $c(A)$ of a given vertex $A$ by computing the output of the circuit on all the pairs $(i,A)$ with $i \in [n-2k+1]$. Moreover, by computing the output of the circuit on the pair $(c(A),[n]\setminus A)$, one can determine whether $A$ lies on a monochromatic edge. It can further be seen that if $A$ does lie on a monochromatic edge then the other endpoint of such an edge can be found efficiently.

\begin{remark}\label{remark:syntactic}
The existence of solutions to instances of problems in the complexity class $\TFNP$ should follow from a {\em syntactic} guarantee of totality. However, the Boolean circuit that represents a coloring with subset queries in the $\KneserP$ problem is not syntactically guaranteed to be consistent. This issue can be addressed by allowing in the definition of the problem a solution that demonstrates a violation of the input coloring.
In fact, such a modification is also needed in the definition of the $\Agree$ problem (see Definition~\ref{def:Agree}). There, one has to allow a solution that demonstrates a violation of the monotonicity of an input utility function.
For simplicity of presentation, we skip this issue in the proofs.
\end{remark}

We turn to prove Theorem~\ref{thm:KneserVSAgree}, which implies that the $\Agree$ problem is at least as hard as the $\KneserP$ problem with subset queries.
We consider here the problems in the white-box input model (see Section~\ref{sec:models} and Definitions~\ref{def:KneserSProblems} and~\ref{def:Agree}).

\paragraph{The reduction.}
Consider an instance of the $\KneserP$ problem, a coloring $c: \binom{[n]}{k} \rightarrow [n-2k+1]$ of the Kneser graph $K(n,k)$ for integers $n$ and $k$ satisfying $n \geq 2k$.
We define an instance of the $\Agree$ problem that consists of a collection $M = [n]$ of $n$ items and $\ell = n-2k+1$ agents. The utility function $u_i : P(M) \rightarrow \{0,1\}$ of the $i$th agent is defined for $i \in [\ell]$ as follows. For every $S \subseteq M$, $u_i(S) = 1$ if there exists a $k$-subset $A$ of $S$ such that $c(A)=i$, and $u_i(S)=0$ otherwise.
Observe that each function $u_i$ is monotone, because if a set $S$ contains a vertex of $K(n,k)$ colored $i$ so does any set that contains $S$.

\paragraph{Correctness.}
To prove the correctness of the reduction, consider a solution to the constructed instance of the $\Agree$ problem, i.e., a set $S \subseteq M$ of size
\[|S| \leq \min \Big ( \Big \lfloor \frac{n+\ell}{2} \Big \rfloor, n \Big ) = \min (n-k,n) = n-k \]
that is agreeable to all agents.
Note that its complement $M \setminus S$ has size at least $k$. Let $A$ be a $k$-subset of $M \setminus S$, and let $i$ denote the color of $A$ according to the coloring $c$.
By the definition of $u_i$, it holds that $u_i(M \setminus S)=1$, and since $S$ is agreeable to agent $i$, it follows that $u_i(S)=1$ as well. Hence, there exists a $k$-subset $B$ of $S$ such that $c(B)=i$, implying that the vertices $A$ and $B$ form a monochromatic edge in $K(n,k)$.

We claim that given the solution $S$ of the constructed $\Agree$ instance, it is possible to find in polynomial time a solution to the original instance of the $\KneserP$ problem. Indeed, given the solution $S$ one can define $A$ as an arbitrary $k$-subset of $M \setminus S$ and determine its color $i$ using the given Boolean circuit. As shown above, it follows that $u_i(S)=1$.
Then, it is possible to efficiently find a $k$-subset $B$ of $S$ with $c(B)=i$. To do so, we maintain a set initiated as $S$, and remove elements from the set as long as it still contains a vertex of color $i$ (which can be checked in polynomial time using the Boolean circuit), until we get the desired set $B$ of size $k$.
This gives us the monochromatic edge $\{A,B\}$ of the $\KneserP$ instance.

\paragraph{White-box model.}
It follows from the description of the reduction that the Boolean circuit that represents the input coloring with subset queries precisely represents the utility functions of the constructed instance of the $\Agree$ problem. In particular, the reduction can be implemented in polynomial time.

\section*{Acknowledgements}

We would like to thank Gabriel Istrate and Andrey Kupavskii for helpful discussions and the anonymous referees of this paper and of its preliminary versions~\cite{Haviv22a,Haviv22b} for their valuable suggestions and comments.

\bibliographystyle{alpha}
\bibliography{mono_kneser}

\newcommand{\etalchar}[1]{$^{#1}$}
\begin{thebibliography}{DFHM22}

\bibitem[B{\'{a}}r78]{Barany78}
Imre B{\'{a}}r{\'{a}}ny.
\newblock A short proof of {K}neser's conjecture.
\newblock {\em J. Comb. Theory, Ser. {A}}, 25(3):325--326, 1978.

\bibitem[BCE{\etalchar{+}}98]{BeameCEIP98}
Paul Beame, Stephen~A. Cook, Jeff Edmonds, Russell Impagliazzo, and Toniann
  Pitassi.
\newblock The relative complexity of {NP} search problems.
\newblock {\em J. Comput. Syst. Sci.}, 57(1):3--19, 1998.
\newblock Preliminary version in STOC'95.

\bibitem[Bor33]{Borsuk33}
Karol Borsuk.
\newblock Drei {S}{\"{a}}tze {\"{u}}ber die $n$-dimensionale euklidische
  {S}ph{\"{a}}re.
\newblock {\em Fundamenta Mathematicae}, 20(1):177--190, 1933.

\bibitem[BSW23]{BSW23}
Michaela Borzechowski, Patrick Schnider, and Simon Weber.
\newblock An {FPT} algorithm for splitting a necklace among two thieves.
\newblock In {\em Proc. of the 34th International Symposium on Algorithms and
  Computation ({ISAAC}'23)}, pages 15:1--15:14, 2023.

\bibitem[CFK{\etalchar{+}}15]{CyganFKLMPPS15}
Marek Cygan, Fedor~V. Fomin, Lukasz Kowalik, Daniel Lokshtanov, D{\'{a}}niel
  Marx, Marcin Pilipczuk, Michal Pilipczuk, and Saket Saurabh.
\newblock {\em Parameterized Algorithms}.
\newblock Springer, 2015.

\bibitem[DF09]{DinurF09}
Irit Dinur and Ehud Friedgut.
\newblock Intersecting families are essentially contained in juntas.
\newblock {\em Comb. Probab. Comput.}, 18(1--2):107--122, 2009.

\bibitem[DFH21]{DeligkasFH21}
Argyrios Deligkas, Aris Filos{-}Ratsikas, and Alexandros Hollender.
\newblock Two's company, three's a crowd: {C}onsensus-halving for a constant
  number of agents.
\newblock In {\em Proc. of the 22nd {ACM} Conference on Economics and
  Computation (EC'21)}, pages 347--368, 2021.

\bibitem[DFHM22]{DeligkasFHM22}
Argyrios Deligkas, John Fearnley, Alexandros Hollender, and Themistoklis
  Melissourgos.
\newblock Constant inapproximability for {PPA}.
\newblock In {\em Proc. of the 54th Annual {ACM} {SIGACT} Symposium on Theory
  of Computing ({STOC}'22)}, pages 1010--1023, 2022.

\bibitem[DFK17]{DengFK17}
Xiaotie Deng, Zhe Feng, and Rucha Kulkarni.
\newblock Octahedral {T}ucker is {PPA}-complete.
\newblock {\em Electronic Colloquium on Computational Complexity {(ECCC)}},
  24:118, 2017.

\bibitem[EKR61]{EKR61}
Paul Erd{\"{o}}s, Chao Ko, and Richard Rado.
\newblock Intersection theorems for systems of finite sets.
\newblock {\em Quart. J. Math.}, 12(1):313--320, 1961.

\bibitem[FG18]{FG18}
Aris Filos{-}Ratsikas and Paul~W. Goldberg.
\newblock Consensus halving is {PPA}-complete.
\newblock In {\em Proc. of the 50th Annual {ACM} {SIGACT} Symposium on Theory
  of Computing ({STOC}'18)}, pages 51--64, 2018.

\bibitem[FG19]{FG19}
Aris Filos{-}Ratsikas and Paul~W. Goldberg.
\newblock The complexity of splitting necklaces and bisecting ham sandwiches.
\newblock In {\em Proc. of the 51st Annual {ACM} {SIGACT} Symposium on Theory
  of Computing ({STOC}'19)}, pages 638--649, 2019.

\bibitem[FHSZ21]{Filos-RatsikasH21}
Aris Filos{-}Ratsikas, Alexandros Hollender, Katerina Sotiraki, and Manolis
  Zampetakis.
\newblock A topological characterization of modulo-\emph{p} arguments and
  implications for necklace splitting.
\newblock In {\em Proc. of the 32nd {ACM-SIAM} Symposium on Discrete Algorithms
  ({SODA'21})}, pages 2615--2634, 2021.

\bibitem[FHSZ23]{FHSZ20}
Aris Filos{-}Ratsikas, Alexandros Hollender, Katerina Sotiraki, and Manolis
  Zampetakis.
\newblock Consensus-halving: {D}oes it ever get easier?
\newblock {\em {SIAM} J. Comput.}, 52(2):412--451, 2023.
\newblock Preliminary version in EC'20.

\bibitem[FK20]{FrankK20}
Peter Frankl and Andrey Kupavskii.
\newblock Maximal degrees in subgraphs of {K}neser graphs.
\newblock {\em arXiv}, abs/2004.08718, 2020.

\bibitem[FLSZ19]{KernelBook19}
Fedor~V. Fomin, Daniel Lokshtanov, Saket Saurabh, and Meirav Zehavi.
\newblock {\em Kernelization: {T}heory of Parameterized Preprocessing}.
\newblock Cambridge University Press, 2019.

\bibitem[Gal56]{Gale56}
David Gale.
\newblock Neighboring vertices on a convex polyhedron.
\newblock In H.~W. Kuhn and A.W. Tucker, editors, {\em Linear Inequalities and
  Related Systems}, volume~38 of {\em Annals of Math. Studies}, pages 255--263.
  1956.

\bibitem[GHI{\etalchar{+}}20]{GoldbergHIMS20}
Paul~W. Goldberg, Alexandros Hollender, Ayumi Igarashi, Pasin Manurangsi, and
  Warut Suksompong.
\newblock Consensus halving for sets of items.
\newblock In {\em Proc. of the 16th Web and Internet Economics International
  Conference ({WINE}'20)}, pages 384--397, 2020.

\bibitem[Hav22a]{Haviv22-FISC}
Ishay Haviv.
\newblock The complexity of finding fair independent sets in cycles.
\newblock {\em Comput. Complex.}, 31(2):14, 2022.
\newblock Preliminary version in ITCS'21.

\bibitem[Hav22b]{Haviv22a}
Ishay Haviv.
\newblock A fixed-parameter algorithm for the {K}neser problem.
\newblock In {\em Proc. of the 49th International Colloquium on Automata,
  Languages, and Programming ({ICALP}'22)}, pages 72:1--72:18, 2022.

\bibitem[Hav22c]{Haviv22b}
Ishay Haviv.
\newblock A fixed-parameter algorithm for the {S}chrijver problem.
\newblock In {\em Proc. of the 17th International Symposium on Parameterized
  and Exact Computation ({IPEC}'22)}, pages 16:1--16:16, 2022.

\bibitem[HM67]{HM67}
Anthony J.~W. Hilton and Eric~Charles Milner.
\newblock Some intersection theorems for systems of finite sets.
\newblock {\em Quart. J. Math.}, 18(1):369--384, 1967.

\bibitem[HR65]{HR65}
Charles~R. Hobby and John~R. Rice.
\newblock A moment problem in ${L}_1$ approximation.
\newblock {\em Proc. Amer. Math. Soc.}, 16(4):665--670, 1965.

\bibitem[IBC21]{IstrateBC21}
Gabriel Istrate, Cosmin Bonchis, and Adrian Craciun.
\newblock Kernelization, proof complexity and social choice.
\newblock In {\em 48th International Colloquium on Automata, Languages, and
  Programming ({ICALP}'21)}, pages 135:1--135:21, 2021.

\bibitem[Kne55]{Kneser55}
Martin Kneser.
\newblock Aufgabe 360.
\newblock {\em Jahresbericht der Deutschen Mathematiker-Vereinigung}, 58(2):27,
  1955.

\bibitem[Kup18]{Kupavskii18}
Andrey Kupavskii.
\newblock Diversity of uniform intersecting families.
\newblock {\em Eur. J. Comb.}, 74:39--47, 2018.

\bibitem[Lov78]{LovaszKneser}
L{\'{a}}szl{\'{o}} Lov{\'{a}}sz.
\newblock Kneser's conjecture, chromatic number, and homotopy.
\newblock {\em J. Comb. Theory, Ser. {A}}, 25(3):319--324, 1978.

\bibitem[Mat04]{Matousek04}
Ji{\v{r}}{\'{\i}} Matou{\v{s}}ek.
\newblock A combinatorial proof of {K}neser's conjecture.
\newblock {\em Combinatorica}, 24(1):163--170, 2004.

\bibitem[Mat07]{MatousekBook}
Ji{\v{r}}{\'{\i}} Matou{\v{s}}ek.
\newblock {\em Using the {B}orsuk-{U}lam Theorem: Lectures on Topological
  Methods in Combinatorics and Geometry}.
\newblock Springer Publishing Company, Incorporated, 2007.

\bibitem[McD98]{McDiarmid98}
Colin McDiarmid.
\newblock Concentration.
\newblock In {\em Probabilistic methods for algorithmic discrete mathematics},
  volume~16 of {\em Algorithms Combin.}, pages 195--248. Springer, Berlin,
  1998.

\bibitem[MP91]{MegiddoP91}
Nimrod Megiddo and Christos~H. Papadimitriou.
\newblock On total functions, existence theorems and computational complexity.
\newblock {\em Theor. Comput. Sci.}, 81(2):317--324, 1991.

\bibitem[MS19]{ManurangsiS19}
Pasin Manurangsi and Warut Suksompong.
\newblock Computing a small agreeable set of indivisible items.
\newblock {\em Artif. Intell.}, 268:96--114, 2019.
\newblock Preliminary versions in IJCAI'16 and IJCAI'17.

\bibitem[MZ04]{MatousekZ04}
Ji{\v{r}}{\'{\i}} Matou{\v{s}}ek and G{\"{u}}nter~M. Ziegler.
\newblock Topological lower bounds for the chromatic number: {A} hierarchy.
\newblock {\em Jahresbericht der DMV}, 106(2):71--90, 2004.

\bibitem[Pap94]{Papa94}
Christos~H. Papadimitriou.
\newblock On the complexity of the parity argument and other inefficient proofs
  of existence.
\newblock {\em J. Comput. Syst. Sci.}, 48(3):498--532, 1994.

\bibitem[Sch78]{SchrijverKneser78}
Alexander Schrijver.
\newblock Vertex-critical subgraphs of {K}neser graphs.
\newblock {\em Nieuw Arch.\ Wiskd.}, 26(3):454--461, 1978.

\bibitem[SS03]{SimmonsS03}
Forest~W. Simmons and Francis~Edward Su.
\newblock Consensus-halving via theorems of {B}orsuk-{U}lam and {T}ucker.
\newblock {\em Math. Soc. Sci.}, 45(1):15--25, 2003.

\bibitem[Tal03]{Talbot03}
John Talbot.
\newblock Intersecting families of separated sets.
\newblock {\em J. London Math. Soc.}, 68(1):37--51, 2003.

\end{thebibliography}

\end{document}